\numberwithin{equation}{section}
\newtheorem{theorem}{Theorem}
\newtheorem{prop}{Proposition}
\newtheorem*{con-non}{Conjecture}
\theoremstyle{plain}
\newcommand{\Tr}{{\rm tr}}
\newcommand{\Su}{{\mathfrak{su}}}
\newcommand{\So}{{\mathfrak{so}}}
\newcommand{\ad}{{\text{ad}}}
\newcommand{\SU}{{\mathsf{SU}}}
\begin{document}
\begin{titlepage}
\begin{center}

{\large \bf {Computable conditions for order-2  $CP$ symmetry in NHDM potentials}}

\vskip 1cm

R.~Plantey\footnote{E-mail: Robin.Plantey@ntnu.no} 
 and
M.~Aa.~Solberg\footnote{E-mail: Marius.Solberg@ntnu.no} 

\vspace{1.0cm}

Department of Structural Engineering, \\
Norwegian University of Science and Technology, \\
7034 Trondheim, Norway. \\

\end{center}

\vskip 3cm

\begin{abstract}
We derive necessary and sufficient conditions for order-2 $CP$ ($CP2$) symmetry in $N$-Higgs-doublet potentials for $N>2$. The conditions, which are formulated as relations between vectors that transform under the adjoint representation of $\mathsf{SU}(N)$ under a change of doublet basis, are representation theoretical in nature. Making use of Lie algebra and representation theory we devise an efficient, computable algorithm which may be applied to decide whether or not a given numerical potential is $CP2$ invariant. 
\end{abstract}

\end{titlepage}

\tableofcontents

\setcounter{footnote}{0}

\pagebreak
\section{Introduction}
\label{sect:intro}

The possibility for $CP$-violation is arguably one of the most attractive features of Multi-Higgs-Doublets models (NHDM), enabling them to accommodate baryogenesis~\cite{Sakharov:1967dj,Turok:1990zg} and contributing to much of their rich phenomenology~\cite{Cordero-Cid2016,Logan:2020mdz}.
Yet, as with other symmetries of NHDMs, $CP$ symmetry can be apparent in one doublet basis and completely obfuscated in another. Indeed, any basis transformation followed by the canonical $CP$ transformation $\Phi_i (\vec{x},t)\to \Phi_i^\ast (-\vec{x},t)$ can be a valid $CP$ symmetry. That is, a general $CP$ transformation takes the form
\begin{align}\label{E:genCP}
	CP: \Phi_i (\vec{x},t)&\to V_{ij}\Phi_j^\ast (-\vec{x},t),
\end{align}
for some matrix $V\in \mathsf{U}(N)$~\cite{Feinberg1959}.
In addition, a $CP$ symmetry need not be of order 2, with $CP^2$ being the identity, but may be of higher order $p=2q$, with $p>2$ the smallest integer such that $CP^p$ is the identity instead. In contrast to higher-order $CP$ symmetries, $CP2$ is equivalent to the existence of a basis where all the parameters are real~\cite{Gunion_2005}. In other words, $CP2$ is equivalent to the canonical $CP$ in some basis.
NHDMs with CPs of higher order than 2 often generate $CP2$ as an accidental symmetry, and $CP2$ was long considered to be the only $CP$. However, while in the 2HDM $CP2$ is the only possible $CP$, a 3HDM with an order-4 $CP$ symmetry ($CP4$) and no other symmetries was identified and studied in~\cite{Ivanov_2012, Ivanov_2016}. Even higher order $CP$s than $CP4$, with no accidental symmetries, were constructed and examined in~\cite{Ivanov_2018}.

Thus, establishing whether a particular potential breaks $CP$ is challenging but crucial for conducting a phenomenological analysis. For the general 2HDM, necessary and sufficient conditions for $CP$ symmetry were first derived in terms of basis-invariant quantities in~\cite{Gunion_2005} and later using basis-covariant quantities in the bilinear formalism~\cite{Ivanov2006360,Maniatis2008}. Methods based on basis-covariant objects proved to be quite powerful and have since been succesfully applied to the 3HDM to detect $CP2$, $CP4$ as well as other symmetries~\cite{Nishi:2006tg,Ivanov:2018ime,Nishi:2011gc,deMedeirosVarzielas:2019rrp}. In particular, within this framework, a complete solution for detecting $CP2$ for $N=3$ and a discussion of the cases $N>3$ was given in~\cite{Nishi:2006tg}. In this work, we show that this idea, formulated in the language of representation theory, can be extended to derive necessary and sufficient conditions for explicit $CP2$ conservation for arbitrary $N$. While the conditions themselves can be simply formulated for all $N$, implementing them in practice is not trivial. Making extensive use of Lie algebra and representation theory, we devise an efficient algorithm for detecting whether an arbitrary potential has a $CP2$ symmetry. Thus we are able to check whether a real basis exists, although the possibility of spontaneous $CP$ violation is not addressed in this work.

Throughout this paper we allow $N$ to be arbitrarily large, since our method in principle applies to any number of doublets, although its computational cost increases with $N$. While the 2HDM and 3HDM are currently the most relevant for phenomenology, models with more doublets have received some attention. 4HDMs were studied in e.g.~\cite{Bjorken:1977vt,Arroyo-Urena_2021} and~\cite{Gon_alves_2023}. In the latter article, one doublet couples to quarks and three doublets couple to charged leptons, allowing for flavor changing neutral currents in the leptonic sector, but not in the quark sector. 5HDMs in the context of higher order $CP$s were scrutinized in~\cite{Ivanov_2018}. A 6HDM for Dark Matter was examined in~\cite{Alakhras:2017bbe}, and Grand Unified Theories with eight and nine Higgs doublets were studied in~\cite{Harada:2016fpj} and~\cite{Brahmachari:2008qg}, respectively. Moreover, in the "Private Higgs'' extension of the SM each charged fermion acquire mass from its own Higgs doublet, through $\mathcal{O}(1)$ Yukawa couplings, and is hence another example of a model with $N=9$ Higgs doublets~\cite{Porto_2008,BenTov:2012cx}. The analysis of such models may be facilitated by the general algorithm for $CP2$ detection presented here.

The article is structured as follows: section~\ref{sec:formalism} contains a presentation of the covariant framework for identifying symmetries, which is then applied for deriving a characterization of $CP2$ symmetry, as well as a reminder of Lie algebra theory and proofs of some representation theoretical results for the orthogonal algebra $\So(N)$.  Based on the characterization we derive, algorithms for checking the existence of a $CP2$ symmetry are given in section~\ref{sect:algo}. In section~\ref{sect:examples}, the algorithms are applied to concrete potentials to check for $CP2$. Finally, in section~\ref{sec:Summary} we summarize our results and make final remarks. Additional mathematical results and numerical values for a 7HDM example are found in appendix~\ref{app:extra-results} and~\ref{app:N7-values}.

\section{Formalism}
\label{sec:formalism}
We write the potential for $N$ Higgs $\mathsf{SU}(2)$ doublets $\Phi_i$ in the bilinear formalism~\cite{Maniatis:2015gma}
\begin{equation}
\label{eq:Vbilinear}
V = M_0K_0 + M_aK_a + \Lambda_0 K_0^2 + L_a K_0K_a + \Lambda_{ab}K_aK_b
\end{equation}
where the bilinears $K_\alpha$, $\alpha = 0, \ldots, N^2-1$ are given in terms of the generalized Gell-Mann matrices $\lambda_a$
\begin{equation}\label{E:bilinears}
K_0 = \Phi_i^\dagger\Phi_i \,, \quad K_a = \Phi_i^\dagger (\lambda_a)_{ij} \Phi_j.
\end{equation}
Writing the potential in this manner is advantageous because the bilinears have simple transformation properties under a change of basis
\begin{equation}\label{E:HbasisCh}
\Phi_i \rightarrow U_{ij}\Phi_j,\quad U\in \mathsf{SU}(N),
\end{equation}
with $K_0$ being a singlet while $K_a$ transforms according to the adjoint representation of $\mathsf{SU}(N)$
\begin{equation}
K_0 \rightarrow K_0 \,, \quad K_a\rightarrow R_{ab}(U)K_b
\end{equation}
where
\begin{equation}\label{E:R(U)}
R_{ab}(U) = \frac{1}{2}\text{Tr}(U^\dagger \lambda_a U \lambda_b).
\end{equation}
Since the adjoint representation is the linear action of $\mathsf{SU}(N)$ on the vector space given by its own Lie algebra, all the adjoint vectors which characterize the potential live in $\mathfrak{su}(N)$ which is then the natural setting to derive properties of the potential.

Now, to keep the potential $V$ invariant under the change of basis~\eqref{E:HbasisCh}, the matrix $\Lambda$
has to transform as
\begin{equation}
\label{E:LambdaTrafoU}
\Lambda \to R(U) \Lambda R^T(U).
\end{equation}
The generalized Gell-Mann matrices form a basis for the Lie algebra $\mathfrak{su}(N)$ and satisfy the commutation relations\footnote{In this basis the Killing form is proportional to the identity hence we do not distinguish between upper and lower Lie algebra indices. Moreover, we apply the physicist's definition of a Lie algebra, for mathematicians the mentioned basis would be $\{i \lambda_j \}_{j=1}^{N^2-1}$.}
\begin{equation}
[\lambda_a, \lambda_b] = 2if_{abc}\lambda_c.
\end{equation}
For convenience, we order the generalized Gell-Mann matrices as in~\cite{Solberg:2018aav}, where the antisymmetric matrices appear first. That is
\begin{equation}
\label{eq:custorder}
\lambda_a^T = -\lambda_a \quad \text{for} \quad a=1,\ldots ,k\equiv\frac{N(N-1)}{2}.
\end{equation}
As we will see in section~\ref{sec:framework}, the fact that this subset is equivalent to the defining representation of $\mathfrak{so}(N)$ can be used to derive simple necessary and sufficient conditions for $CP2$ symmetry in NHDMs.

\subsection[Covariant framework for detecting  $CP2$]{Covariant framework for detecting \boldmath $CP2$}
\label{sec:framework}

Let us now describe the setting for characterizing $CP2$ using relations among basis-covariant objects. Our method relies on viewing the adjoint vectors which characterize the potential as elements of $\Su(N)$, thanks to the Lie algebra isomorphism between $\mathfrak{su}(N)$ and $\mathbb{R}^{N^2-1}$ equipped with the F-product from~\cite{deMedeirosVarzielas:2019rrp}
\begin{align}
F : \mathbb{R}^{N^2-1}\times \mathbb{R}^{N^2-1} &\rightarrow \mathbb{R}^{N^2-1}\\
(a,b) &\mapsto f_{ijk}a_i b_j \equiv F_k^{(a,b)}
\end{align}
where $f_{ijk}$ are the structure constants of $\mathfrak{su}(N)$ in the Gell-Mann basis.
The isomorphism is then given by the map
\begin{align}
\label{eq:isomorphism}
\Omega : \mathbb{R}^{N^2-1} &\rightarrow \mathfrak{su}(N)\\
a &\mapsto a_i \lambda_i. 
\end{align}
In what follows, we will denote vectors of $\mathbb{R}^{N^2-1}$ with lower case letters, and the associated $\mathfrak{su}(N)$ matrices by uppercase letters, e.g.~$A \equiv \Omega(a) = a_i\lambda_i$. 
By definition, the generalized Gell-Mann matrices correspond via $\Omega$ to the canonical basis of $\mathbb{R}^{N^2-1}$ i.e.
\begin{equation}
\Omega(e_a) = \lambda_a.
\end{equation}
That $\Omega$ is an isomorphism between the two algebras is easily shown by noticing that
\begin{equation}
\label{eq:Fprod}
F^{(a,b)} = c \iff [A,B] = 2iC. 
\end{equation}
Using this isomorphism we can decompose $\mathbb{R}^{N^2-1}$ into two subspaces
\begin{equation}
\mathbb{R}^{N^2-1} = E_A \oplus E_S,
\end{equation}
with $E_A$ and $E_S$ corresponding respectively to the antisymmetric and symmetric matrices in $\Su(N)$. It is important to note that $E_A$ has a Lie algebra structure since it corresponds to the $\So(N)$ subalgebra while $E_S$ is only a vector space. 

Higgs basis transformations~\eqref{E:HbasisCh} act on $\Su(N)$ as inner automorphisms
\begin{align}
X\to X'=UXU^\dag\quad \text{for} \quad X\in \mathfrak{su}(N)
\end{align}
and hence preserve commutation relations. It follows from~\eqref{eq:Fprod} that F-product relations are also preserved i.e.
\begin{align}\label{E:FprodTransf}
 F^{(a,b)} = c \iff F^{(a',b')} = c',
\end{align}
where $x'=R(U)x$, cf.~\eqref{E:R(U)}. This is simply the statement that F-products relations are vector relations in the adjoint representation of $\mathfrak{su}(N)$. A consequence of~(\ref{E:FprodTransf}) is that if a subspace $V \subset \mathbb{R}^{N^2-1}$ spanned by vectors $\{v_a\}_{a=1}^n$ forms a subalgebra in the sense
\begin{align}
F^{(v_a,v_b)} \in V, \quad \forall\, a,b\in \{1,\ldots ,n\}
\end{align}
then the transformed basis $\{v'_a=R(U)v_a\}_{a=1}^n$ forms the same subalgebra. Since the NHDM potential is completely determined by $\mathbb{R}^{N^2-1}$ vectors, namely $L$, $M$ and the eigenvectors of $\Lambda$, and $\Su(N)$ invariants, any intrinsic property of an arbitrary potential which can be formulated as a set of characteristic vectors spanning a Lie subalgebra, can be verified in any basis. We will now show that $CP2$ symmetry can be characterized in this way.  

\subsubsection[Necessary and sufficient conditions for  $CP2$ symmetry]{Necessary and sufficient conditions for \boldmath $CP2$ symmetry}
\label{sec:$CP2$-condition}

\begin{theorem}
\label{thm:$CP2$}
An NHDM potential admits a $CP2$ symmetry if and only if the following conditions hold
\begin{itemize}
\item $k=\frac{N(N-1)}{2}$ of $\Lambda$'s eigenvectors, $\{v_a\}_{a=1}^k$, form a basis for the defining representation of $\So(N)$
\item $L\cdot v_a = M\cdot v_a = 0, \quad \forall a\in\{1,\ldots ,k\}$.
\end{itemize}
\end{theorem}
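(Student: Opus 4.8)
The plan is to recast $CP2$ invariance as a statement about the existence of a special doublet basis and then translate that statement into the covariant language of $\mathbb{R}^{N^2-1}$. Recall from the introduction that $CP2$ is equivalent to the existence of a basis in which all parameters of the potential are real. So the first step is to pin down, concretely, what ``all parameters real'' means for the bilinear coefficients $M_a$, $L_a$ and $\Lambda_{ab}$. The key observation is that reality of the potential in a given basis is controlled precisely by how the generalized Gell-Mann matrices split under transposition: the antisymmetric generators $\lambda_a$ with $a\le k$ are purely imaginary, while the remaining symmetric ones are real. I would first show that the canonical $CP$ transformation $\Phi_i\to\Phi_i^\ast$ acts on the adjoint vectors by flipping the sign of the $E_A$-components and fixing the $E_S$-components. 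Consequently, a potential is invariant under the canonical $CP$ exactly when its characteristic vectors $M$, $L$ and each eigenvector of $\Lambda$ lie entirely in $E_S$, and $\Lambda$ is block-diagonal with no mixing between $E_A$ and $E_S$.

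Next I would make the passage from ``canonical $CP$ in some basis'' to the basis-covariant conditions in the theorem. Under a change of doublet basis $U\in\SU(N)$, the canonical $CP$ operator $V=\mathbb{1}$ becomes $V\to U V U^T$, and conjugating by $U$ rotates the decomposition $\mathbb{R}^{N^2-1}=E_A\oplus E_S$ to a new $U$-dependent decomposition. The crucial point is that $E_A$, being the defining representation of $\So(N)$, is exactly the span of the antisymmetric Gell-Mann matrices $\{\lambda_a\}_{a=1}^k$, and under $R(U)$ this $k$-dimensional subspace is carried to $\{R(U)e_a\}_{a=1}^k$, which by~\eqref{E:FprodTransf} again spans an $\So(N)$ subalgebra. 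So the statement ``there exists a basis making the potential real'' becomes ``there exists $U$ such that $R(U)E_S$ contains $M$, $L$ and all eigenvectors of $\Lambda$''; equivalently, the \emph{orthogonal complement} of the span of the relevant eigenvectors must be a rotated copy of $E_A$, i.e.\ a $k$-dimensional $\So(N)$ subalgebra. I would show the two bullet conditions capture exactly this: the first bullet asserts that $k$ eigenvectors of $\Lambda$ span such a subalgebra (call its span $E_A'$), and the second bullet asserts $M,L\perp E_A'$, placing them in the complementary $E_S'$ where the real parameters live.

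For the forward direction I would start from a real basis, read off that the symmetric eigenspace of $\Lambda$ has dimension at least the number of symmetric generators and that the antisymmetric generators are eigenvectors spanning an $\So(N)$ copy, then transport these facts back to the original basis using $R(U)$ and the preservation of $F$-product relations. For the converse I would use the first bullet to produce, via the isomorphism $\Omega$ and the fact that all faithful $k$-dimensional representations of $\So(N)$ inside $\Su(N)$ are conjugate, a basis transformation $U$ sending the spanning eigenvectors onto the canonical antisymmetric generators; then the second bullet guarantees $M$ and $L$ land in $E_S$, and a short argument shows $\Lambda$ must then be real in this basis, yielding canonical $CP$ invariance.

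The main obstacle I anticipate is the converse's rigidity step: knowing merely that $k$ eigenvectors span \emph{some} $\So(N)$ subalgebra is not obviously enough to rotate them onto the \emph{standard} defining copy $E_A$, because a priori $\Su(N)$ could contain inequivalent embeddings of $\So(N)$. I would need the representation-theoretic input—presumably the $\So(N)$ results promised for section~\ref{sec:framework}—that the relevant $k$-dimensional subalgebra is, up to inner automorphism of $\SU(N)$, unique and coincides with the defining representation sitting inside the antisymmetric generators. Establishing that uniqueness, and checking that the accompanying symmetric complement genuinely forces $\Lambda_{ab}$ real rather than merely block-diagonal, is where the real work lies; the sign-flip bookkeeping for $M$ and $L$ should be comparatively routine once the subalgebra is correctly normalized.
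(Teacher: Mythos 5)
Your overall route is the same as the paper's: identify $CP2$ with the existence of a real basis, translate ``all couplings real'' into the block structure ($L,M\in E_S$ and no $E_A$--$E_S$ mixing in $\Lambda$), transport that structure to an arbitrary basis via $R(U)$ using the basis-independence of F-product relations and inner products (forward direction), and for the converse conjugate the distinguished eigenvectors onto $\text{Span}(\lambda_1,\ldots,\lambda_k)$ to exhibit a real basis.

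The one genuine problem is the tool you propose for the converse's rigidity step. You ask for the input that ``the relevant $k$-dimensional subalgebra is, up to inner automorphism of $\SU(N)$, unique.'' As a statement about $\So(N)$ subalgebras of $\Su(N)$ this is false: precisely for $N=3,4,5,6,8$ there exist $\So(N)$ subalgebras not conjugate to the standard copy $E_A$ (Proposition~\ref{prop:subalgebras}; e.g.\ $\mathbf{2}+\mathbf{1}$ in $\Su(3)$, or $\mathbf{8_s}$ and $\mathbf{8_c}$ in $\Su(8)$), and the Ivanov--Silva example of section~\ref{sec:N3} is exactly a potential whose $LM$-orthogonal eigenvectors span such a non-defining copy of $\So(3)$ yet which has no $CP2$. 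So a converse that only assumes ``the $k$ eigenvectors close into an $\So(N)$ subalgebra'' would prove a false statement, and a uniqueness-of-embeddings lemma cannot exist to patch it. What rescues the theorem is that its first bullet is stronger than closure: the eigenvectors form a basis for the \emph{defining representation}, i.e.\ the correspondence $\lambda_a \mapsto V_a$, $a\le k$, is an equivalence of representations. Given that hypothesis, no uniqueness statement is needed at all; what is needed is only that an equivalence between two Hermitian irreducible representations on the same space can be realized by a special unitary matrix. That is the paper's Proposition~\ref{prop:unitary-equiv}, proved by Schur's lemma: from $Y_a=SX_aS^{-1}$ with all matrices Hermitian and the set irreducible, $S^\dagger S$ commutes with every $X_a$, hence is a positive multiple of the identity, so $S$ rescales to a unitary. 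With that lemma in place of your uniqueness hope, your construction of $U$ and the rest of the argument go through; the inequivalent-embedding issue you flag is real, but it belongs to the algorithmic problem of deciding which representation a given eigenvector set generates (Propositions~\ref{prop:unique-irreps} and~\ref{prop:subalgebras}), not to the proof of the theorem.

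Two smaller corrections. In the forward direction, in a real basis the antisymmetric generators need not themselves be eigenvectors of $\Lambda$; the block form only guarantees that $k$ eigenvectors can be chosen inside $E_A$, and it is these that span the defining copy. And in the converse there is no gap between ``block diagonal'' and ``real'': $\Lambda$ is a real symmetric matrix by construction in the bilinear formalism, and absence of $E_A$--$E_S$ mixing (together with $L,M\in E_S$) is exactly equivalent to reality of all doublet-space couplings, since the canonical $CP$ sends $K_a\to -K_a$ for $a\le k$ and $K_a\to K_a$ otherwise.
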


\begin{proof}
Suppose a potential~(\ref{eq:Vbilinear}) has a $CP2$ symmetry, then there exists a basis where all the coefficients are real meaning that 
\begin{equation}
\Lambda = \begin{pmatrix}
\label{E:LambdaCP}
B_N & \mathbf{0}\\
\mathbf{0} & A_N 
\end{pmatrix}
\end{equation}
is block diagonal with $B_N$ and $A_N$ arbitrary symmetric matrices of dimension $k \times k$ and $N^2-1-k \times N^2-1-k$, respectively. In this basis, it is evident that $k$ of $\Lambda$'s eigenvectors span $E_A$, and therefore the image of this set by the isomorphism~(\ref{eq:isomorphism}) is a basis for the defining representation of $\So(N)=\text{Span}(\lambda_1,\ldots ,\lambda_k)$. Denote this subset of eigenvectors by $\{t_a\}_{a=1}^k$, it follows that
\begin{equation}
F^{(t_a,t_b)} \in E_A
\end{equation}
i.e.~this subset of eigenvectors closes under the F-product, a property which can be observed in any Higgs basis since F-product relations are basis-independent. In addition, the existence of a real basis implies that the adjoint vectors $L$ and $M$ of the potential are in $E_S$. Thus, the following basis-invariant conditions
\begin{align}
\label{E:cond-LM}
L\cdot t_a = M\cdot t_a = 0, \quad \forall a\in\{1,\ldots ,k\},
\end{align}
must hold. We will sometimes refer to this condition concisely as $LM$-orthogonality.

Conversely, assume the two conditions of the Theorem hold. Indeed, by assumption the representation given by $\{v_a\}_{a=1}^k$ must be equivalent (i.e.~isomorphic) to the defining representation generated by the first $k$ Gell-Mann matrices. Since an equivalence of two hermitian representations with the same underlying vector space is a similarity transformation~\cite{Hall:2000xt} which, as shown in Proposition~\ref{prop:unitary-equiv}, can always be chosen to be unitary, we have  
\begin{align}
\label{E:transf-to-real}
U V_a U^\dagger  \in \text{Span}(\lambda_1,\ldots ,\lambda_k), \quad \forall  a\in\{1,\ldots ,k\}.
\end{align}
The unitary matrix $U$ above is a Higgs basis transformation which brings $\Lambda$ to the block diagonal form~(\ref{E:LambdaCP}), i.e.~it is a transformation to a real basis. To see this note that~(\ref{E:transf-to-real}) when written in terms of adjoint vectors reads
\begin{equation}
R(U)v_a \in E_A, \quad \forall a\in\{1,\ldots ,k\}
\end{equation}
which implies these $k$ eigenvectors span $E_A$ and that the remaining ones $\{R(U)v_a\}_{a=k+1}^{N^2-1}$ span $E_S$. Hence writing $\Lambda$ in terms of its eigenvectors $v_a$ and eigenvalues $\alpha_a$ using its spectral decomposition it follows that
\begin{equation}
R(U)\Lambda R(U)^T = \sum_{i=a}^{N^2-1} \alpha_a R(U) v_a v_a^T R(U)^T
\end{equation}
is block diagonal as in~(\ref{E:LambdaCP}) and hence does not generate complex terms in the potential. Moreover, $LM$-orthogonality in that basis implies that $L$ and $M$ lie in $E_S$ meaning that no complex terms come from these parts of the potential either. Therefore, the two conditions of the Theorem lead to the existence of a real basis.
\end{proof}

Thus the problem of detecting whether a potential has a $CP2$ symmetry is reduced to determining whether $\Lambda$ has $k$ $LM$-orthogonal eigenvectors which form a basis for the defining representation of $\So(N)$.

\subsection{Identifying Lie algebras and representations}
\label{sec:algebra-id}

Applying the characterization of $CP2$ derived in the previous section requires identifying Lie algebras and their representations. Therefore, for the purpose of making the present paper self-contained, we now give a brief reminder of Lie algebra theory focusing on the classification of semisimple Lie algebras and their representations. This presentation is not meant to be exhaustive but simply to introduce characteristics of Lie algebras and how they can be computed in practice. For more complete expositions of Lie algebra and representation theory see e.g.~\cite{Hall:2000xt,fulton1991representation,zee2016group}. 

Given a Lie algebra $\mathfrak{g}$, a Cartan subalgebra $\mathfrak{h}\subset \mathfrak{g}$ is a maximal commuting subalgebra i.e.~a subalgebra of maximal dimension such that 
\begin{equation}
[X,Y]=0, \quad \forall X,Y\in \mathfrak{h}.
\end{equation}
The dimension of $\mathfrak{h}$ is called the rank of $\mathfrak{g}$ and is an important number characterizing a Lie algebra. Denote $\text{dim}(\mathfrak{g})\equiv d$ and $\text{rank}(\mathfrak{g}) \equiv r$ and let $\{H_i\}_{i=1}^r$ be a basis for a Cartan subalgebra. By construction, the adjoint matrices $\ad_{H_i}$ can be simultaneously diagonalized and will have $r$ common nullvectors $h_i$ since
\begin{equation}
\ad_{H_i} h_j = 0 \iff [H_i,H_j] = 0.
\end{equation}
The $d-r$ remaining eigenvectors $e_\alpha$ are called the roots of the algebra and satisfy
\begin{equation}
\ad_{H_i}e_\alpha = \alpha_i e_\alpha \iff [H_i,E_\alpha] = \alpha_i E_\alpha
\end{equation}
while the $d-r$ eigenvalue tuples $\alpha=(\alpha_1,\ldots ,\alpha_r)$, thought of as vectors of $\mathbb{R}^r$, form the so-called root system of $\mathfrak g$. It was shown by Dynkin that semisimple Lie algebras can be classified according to their root system~\cite{Dynkin:1957um}.

Thus an unknown semisimple Lie algebra can be identified if one can compute a Cartan subalgebra for it. This can be done by calculating the nullspace of an adjoint matrix $\ad_{X}$ where $X$ is, by definition, any regular element of $\mathfrak{g}$~\cite{bourbaki2008,de2000lie}. For $\Su(N)$ and its subalgebras,\footnote{If $\mathfrak{g'}\subset\mathfrak{g}$ and $x$ is regular in $\mathfrak{g}$ then $x$ is regular in $\mathfrak{g'}$~\cite{bourbaki2008}.} an element is regular if all its eigenvalues are distinct. Thus a Cartan subalgebra can be computed from a generic element e.g.~randomly sampled.

The dimension of the nullspace of $\ad_X$ then gives the rank of $\mathfrak g$, and the nullvectors $h_i$ provide a basis $H_i$ for a Cartan subalgebra. One can then simultaneously diagonalize the matrices $\ad_{H_i}$ and find the root system. Figure~\ref{fig:B3vsC3} shows how $\So(7)$ and $\mathfrak{sp}(6)$, which have the same dimension and rank, differ by their root system.

\begin{figure}[htbp!]
    \centering
    \begin{subfigure}[t]{0.5\textwidth}
        \centering
        \includegraphics[height=8cm]{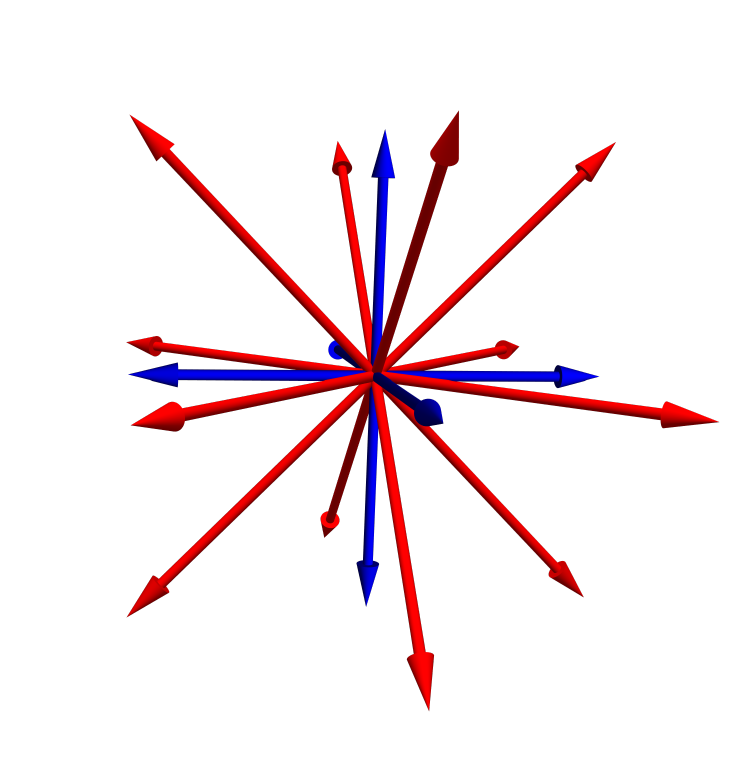}
        \caption{$\So(7)$}
    \end{subfigure}%
    ~ 
    \begin{subfigure}[t]{0.5\textwidth}
        \centering
        \includegraphics[height=8cm]{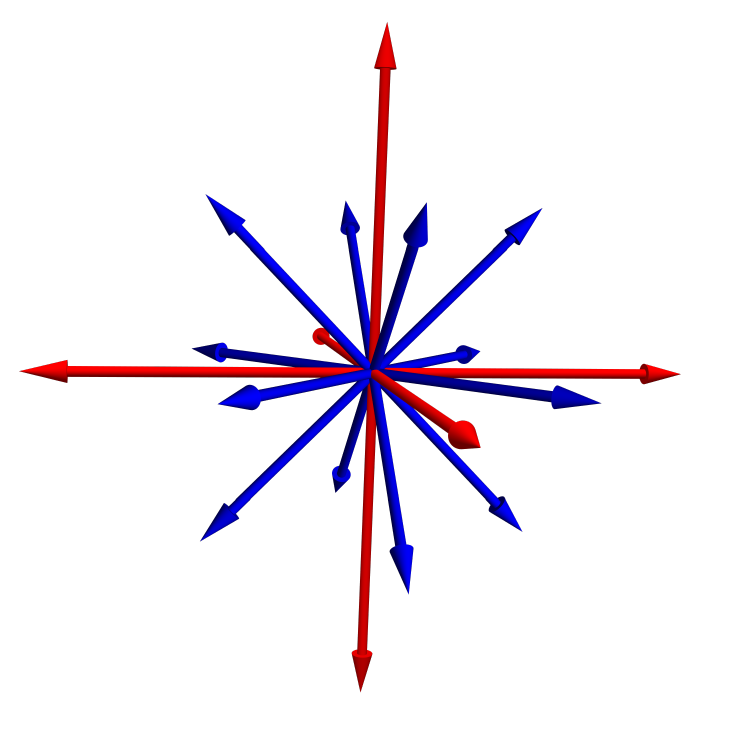}
        \caption{$\mathfrak{sp}(6)$}
    \end{subfigure}
    \caption{Root systems of $\So(7)$ and $\mathfrak{sp}(6)$ with long (short) roots shown in red (blue). The root system distinguishes these two 21-dimensional rank-3 algebras.}
       \label{fig:B3vsC3}
\end{figure}

Once the root system $R$ has been found and an ordered set of positive simple roots $\{\beta_1,\ldots ,\beta_r\} \subset R$ has been chosen~\cite{Hall:2000xt}, the Lie algebra representation at hand can be identified by computing its highest weight $\Upsilon$ which, for an n-dimensional irreducible representation, is a vector of $\mathbb{R}^r$ characterizing that representation~\cite{fulton1991representation}. In the case of a reducible representation there will be one highest weight per irreducible component. To each highest weight corresponds a simultaneous nullvector $v_0$ of the positive simple roots
\begin{equation}
E_{\beta_i}v_0 \equiv 0\, , \quad \forall i\in \{1,\ldots ,r\}.
\end{equation}
The components of the highest weight $\Upsilon=(a_1,\ldots ,a_r)$ in the basis of so-called fundamental weights $\{\omega_i\}_{i=1}^r$~\cite{fulton1991representation}, sometimes called Dynkin labels, are the smallest integers satisfying
\begin{equation}
E_{-\beta_i}^{1+a_i}~ v_0 = 0 \, , \quad i\in \{1,\ldots ,r\}.
\end{equation}

The dimension of an irreducible representation $\Gamma_\Upsilon$ with highest weight $\Upsilon$ is then given by the Weyl dimension
formula~\cite{fulton1991representation, Feger:2019tvk}
\begin{align}\label{E:WeylDimForm}
	\text{dim} (\Gamma_\Upsilon) = \prod_{\alpha \in R^+} 
	\frac{(\alpha,\rho+\Upsilon)}{(\alpha,\rho)}, 
\end{align}
where $R^+\subset R$ is the set of positive roots, $\rho$ is half the sum of the 
positive roots and $(,)$ is the Euclidean inner product. The irreducible representations where one Dynkin label equals one and all other Dynkin labels are zero are called fundamental representations. In particular, the fundamental representation $(1,0,\ldots,0)$ usually corresponds to the defining representation.

\subsubsection[$\So(N)$ subalgebras of  $\Su(N)$]{\boldmath $\So(N)$ subalgebras of \boldmath $\Su(N)$ }

We now prove results about $\So(N)$ subalgebras of $\Su(N)$ that will enable us to devise an algorithm for identifying the defining representation of $\So(N)$ which, as shown in Theorem~\ref{thm:$CP2$}, characterizes $CP2$ symmetry in the NHDM potential.

It can be shown using eq.~\eqref{E:WeylDimForm} that the fundamental representations
of the odd orthogonal Lie algebras $B_r=\So (2r+1)_\mathbb{C}$ with $r\geq 2$, have the following dimensions~\cite{fulton1991representation}:
\begin{align}\label{E:dimOddAlg}
	\text{dim}(\Gamma_{\omega_k})&=\binom{2r+1}{k}, \quad k<r \nonumber\\
	\text{dim}(\Gamma_{\omega_r})&=2^r.
\end{align}
On the other hand, for the even orthogonal algebras $D_r=\So (2r)_\mathbb{C}$ with $r\geq 4$, the fundamental representations have dimensions given by~\cite{fulton1991representation}
\begin{align}\label{E:dimEvenAlg}
	\text{dim}(\Gamma_{\omega_k})&=\binom{2r}{k},  \quad k\leq r- 2 \nonumber\\ 
	\text{dim}(\Gamma_{\omega_{r-1}})&=\text{dim}(\Gamma_{\omega_r})=2^{r-1}.
\end{align}

\begin{prop}
\label{prop:unique-irreps}
If $N\geq 3$ and $N \ne 4, 8$, the defining representation $\mathbf{N}$ is the only irreducible representation of $\mathfrak{so}(N)$ of dimension
$N$, up to equivalence of representations. 
\end{prop}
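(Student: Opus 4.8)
The plan is to reduce the statement to a comparison of the dimensions of the fundamental representations, exploiting the fact that the Weyl dimension formula~\eqref{E:WeylDimForm} is strictly monotonic in the Dynkin labels. First I would identify the algebra: for $N\geq 3$ and $N\neq 4$ the algebra $\So(N)$ is simple, of type $B_r$ when $N=2r+1$ and of type $D_r$ when $N=2r$, and in both families the defining representation $\mathbf{N}$ is the first fundamental representation $\Gamma_{\omega_1}$, of dimension $\binom{N}{1}=N$. The value $N=4$ is genuinely excluded because $\So(4)\cong\Su(2)\oplus\Su(2)$ is not simple and carries product representations such as $(\mathbf{2},\mathbf{2})$, $(\mathbf{4},\mathbf{1})$ and $(\mathbf{1},\mathbf{4})$, all of dimension $4$. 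Since every finite-dimensional irreducible representation has a dominant highest weight $\Upsilon=\sum_i a_i\omega_i$ with $a_i\in\mathbb{Z}_{\geq 0}$, and since each factor $(\alpha,\rho+\Upsilon)/(\alpha,\rho)$ in~\eqref{E:WeylDimForm} is at least $1$ and strictly increases whenever a label $a_i$ with $(\alpha,\omega_i)>0$ is raised, the dimension strictly increases with every Dynkin label. Hence it suffices to show that (i) no fundamental representation other than $\Gamma_{\omega_1}$ has dimension $N$, and (ii) every $\Upsilon$ with $\sum_i a_i\geq 2$ has dimension strictly greater than $N$.

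For (i) I would read off the fundamental dimensions from~\eqref{E:dimOddAlg} and~\eqref{E:dimEvenAlg}. The binomial fundamental representations satisfy $\binom{N}{k}\geq\binom{N}{2}>N$ for $2\leq k$ in the relevant range and $N\geq 5$, so none of them equals $N$. The only remaining candidates are the spinor representations, of dimension $2^r$ for $B_r$ and $2^{r-1}$ for $D_r$. Comparing with $N$ yields the exponential-versus-linear inequalities $2^r>2r+1$ for $r\geq 3$ and $2^{r-1}>2r$ for $r\geq 5$, while equality $2^{r-1}=2r$ holds precisely at $r=4$. This is exactly what forces the exclusion of $N=8$: for $\So(8)=D_4$ the two spinor representations $\Gamma_{\omega_3}$ and $\Gamma_{\omega_4}$ both have dimension $2^3=8=N$ (the triality coincidence), so the defining representation is not the unique $\mathbf{8}$. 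For every other $N$ the spinor dimension differs from $N$.

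For (ii), monotonicity reduces the claim to the minimal weights with $\sum_i a_i=2$: if some $a_j\geq 1$ with $j\neq 1$ then $\dim\Gamma_\Upsilon\geq\dim\Gamma_{\omega_j}>N$ by part (i) whenever the spinor dimension exceeds $N$, while if $\Upsilon=a_1\omega_1$ with $a_1\geq 2$ then $\dim\Gamma_\Upsilon\geq\dim\Gamma_{2\omega_1}=\tfrac{(N-1)(N+2)}{2}>N$ for $N\geq 3$, the latter being the traceless symmetric two-tensor. This closes the argument for all $N$ whose spinor dimension already exceeds $N$, namely odd $N=2r+1\geq 7$ and even $N=2r\geq 10$. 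The finitely many low-rank cases not covered by the stated formulas, or in which the spinor is smaller than the vector, I would dispatch by hand: $N=3$ uses $\So(3)\cong\Su(2)$, which has exactly one irreducible representation in each dimension; $N=6$ uses $\So(6)\cong\Su(4)$, whose unique $6$-dimensional irreducible representation is the self-dual $\Lambda^2\mathbf{4}$; and $N=5$ (type $B_2$, where the spinor has dimension $4<5$) is settled by evaluating~\eqref{E:WeylDimForm} directly on the few small weights $(2,0),(0,2),(1,1)$, all of which exceed $5$.

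The main obstacle is making the reduction in (ii) uniform in the rank, rather than checking infinitely many weights one algebra at a time, and pinning down which numerical coincidences actually produce an extra $N$-dimensional irreducible representation. The clean way around this is precisely the monotonicity of~\eqref{E:WeylDimForm}, which collapses the infinite family of weights with $\sum_i a_i\geq 2$ onto the single bound $\dim\Gamma_{2\omega_1}=(N-1)(N+2)/2$ together with the fundamental dimensions already computed in (i). The only genuine exceptions that survive this analysis are the non-simple algebra at $N=4$ and the $D_4$ triality at $N=8$, with the handful of small-rank isomorphisms verified separately.
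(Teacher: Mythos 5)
Your proposal is correct and follows essentially the same route as the paper's own proof: strict monotonicity of the Weyl dimension formula in the Dynkin labels, comparison with the fundamental representation dimensions~\eqref{E:dimOddAlg} and~\eqref{E:dimEvenAlg} (with the spinor coincidence $2^{r-1}=2r$ isolating $N=8$), the exceptional isomorphisms $\So(3)\cong\Su(2)$ and $\So(6)\cong\Su(4)$, and a direct check of the small weights of $B_2$ for $N=5$. The only differences are presentational (your explicit bound $\dim\Gamma_{2\omega_1}=(N-1)(N+2)/2$ and the discussion of why $N=4$ fails, versus the paper's explicit appeal to the real/complexified representation correspondence and to \texttt{LieART} for the $N=5$ dimensions), so there is nothing substantive to add.
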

\begin{proof}
We will here consider the complexification of $\mathfrak{so}(N)$, $\mathfrak{so}(N)_\mathbb{C}$, but there is a bijection $\psi$ between the complex representations of the real and the
complexified Lie algebra, with $\psi(\Pi)(X+iY)=\Pi(X) + i\Pi(Y)$, with $\Pi$ a complex representation of the real Lie algebra, and $X$ and $Y$ elements of the real Lie algebra.
Moreover, $\psi(\Pi)$ is an irreducible representation of the complexified algebra if and only if $\Pi$ is an irreducible representation of the real~algebra~\cite{hall2016lie}.

The non-trivial representations with the smallest dimensions are the fundamental representations 
$\Gamma_{\omega_i}$, where $\omega_i$ is a fundamental weight~\cite{fulton1991representation}.
Recalling that these representations have one Dynkin label being $1$ and the others $0$, this follows from the Weyl dimension formula~\eqref{E:WeylDimForm} and the fact that the dimension of an irreducible representation strictly increases if the any of the Dynkin labels are increased:
\begin{align}\label{E:dimStrictIncr}
	\text{dim}(\Gamma_{(a_1,\ldots,a_i,\ldots ,a_r)})<\text{dim}(\Gamma_{(a_1,\ldots,a_i+1,\ldots ,a_r)}),
\end{align}
where $\Gamma_{(a_1,\ldots ,a_r)}$ is the irreducible representation with Dynkin labels $(a_1,\ldots ,a_r)$, that is, highest weight $\Upsilon=a_1 \omega_1 + \ldots + a_r \omega_r $. Indeed, since the highest weight $\Upsilon$ is always an integral dominant
element, meaning $(\Upsilon,\alpha)\geq 0 $ for each root $\alpha \in R^+$,~\eqref{E:WeylDimForm} yields the inequality in~\eqref{E:dimStrictIncr}. This inequality is strict since the positive roots span all $\mathbb{R}^r$, and hence for all fundamental weights $\omega_i$ there exists a positive root $\alpha$ such that $(\alpha, \omega_i)>0$.

For $\mathfrak{so}(2r+1)_\mathbb{C}$ with $r\geq 2$,  dim$(\Gamma_{\omega_1})=2r+1<$ dim$(\Gamma_{\omega_k})$ for $r>2$ and $1<k\leq r$, cf.~\eqref{E:dimOddAlg}. 
For $r=2$, dim$(\Gamma_{\omega_2})=4$, but the defining representation $\Gamma_{\omega_1}$ is still the unique irreducible representation of
dimension 5, since dim$(\Gamma_{2\omega_2})=10$ and dim$(\Gamma_{(\omega_1+\omega_2)})=16$, according to \texttt{LieART}~\cite{Feger:2019tvk}, where the latter representations correspond to Dynkin labels $(0,2)$ and
$(1,1)$, respectively. Since the dimension increases strictly with increasing Dynkin labels, cf.~\eqref{E:dimStrictIncr}, there are no representations with the same dimension as the defining representation.

In the case $\mathfrak{so}(2r)_\mathbb{C}$ with $r\geq 4$,
the dimension of the defining representation,
dim$(\Gamma_{\omega_1})=2r<\Gamma_{\omega_k}$ for $1<k\leq r-2$ by~\eqref{E:dimEvenAlg}, while for the cases $k>r-2$, dim$(\Gamma_{\omega_1})=2r<
2^{r-1}$ when $r>4$, so the defining representation has the uniquely least dimension among the fundamental representations, for $r>4$.
And again, due to~\eqref{E:dimStrictIncr},
the defining representation of $\mathfrak{so}(2r)_\mathbb{C}$ becomes the unique irreducible representation with dimension $2r$ for $r>4$.
Note that for $r=4$ (i.e.\ $N=8$),~\eqref{E:dimEvenAlg} gives
dim$(\Gamma_{\omega_1})=$dim$(\Gamma_{\omega_3})=$
dim$(\Gamma_{\omega_4})=8$, but this is one of the two cases the Proposition is not valid.

For the remaining values of $N$, i.e.~$N=3,6$, we have Lie algebra isomorphisms between
$\mathfrak{so}(N)_\mathbb{C}$ and Lie algebras with root system $A_r$. The Lie algebra
$\mathfrak{so}(3)_\mathbb{C}\cong \mathfrak{sl}(2)_\mathbb{C}=A_1$  has exactly one irreducible representation of dimension 3
 (this representation does not correspond to a fundamental weight).
Finally, $\mathfrak{so}(6)_\mathbb{C}\cong \mathfrak{sl}(4)_\mathbb{C}=A_3$ has exactly one 6-dimensional representation, $\Gamma_{\omega_2}$. 
 
The defining representation of $\mathfrak{so}(2)$ is not irreducible over $\mathbb{C}$ and is hence not included in this Proposition.
\end{proof}

Both representations and subalgebras of Lie algebras are defined in terms of Lie algebra homomorphisms. The only difference between the two concepts is that a Lie subalgebra always corresponds to an injective (1-1) homomorphism with image in the algebra to which it is a subalgebra, while no such restrictions apply to a representation in general. A subalgebra ${\So(N)}$ of ${\Su(N)}$ is then the same as a faithful representation of ${\So(N)}$ with image in ${\Su(N)}$.
We will by this apply Proposition~\ref{prop:unique-irreps}, which is about representations, to prove a result which describes all possible $\So(N)$ subalgebras of $\Su(N)$, and that is helpful to detect $CP2$ symmetry for any $N$:

\begin{prop}
\label{prop:subalgebras}
The defining representation $\mathbf{N}$ of $\So(N)$ is the only $\So(N)$ subalgebra of $\Su(N)$ up to equivalence (i.e.~conjugation) for $N \geq 3$,
with the following exceptions:   
\begin{itemize}
   \item[] $N=3$: $\mathbf{2}+\mathbf{1}$
     \item[] $N=4$: 
		 $\mathbf{2}+\mathbf{2'} $
     \item[] $N=5$: $\mathbf{4}+\mathbf{1}$
		\item[] $N=6$: 
		$\mathbf{4}+\mathbf{1}+\mathbf{1}$\: \text{and} \: 
		$\mathbf{\bar{4}}+\mathbf{1}+\mathbf{1}$
		\item[]  $N=8$: $\mathbf{8_c}$ \text{and} \:$\mathbf{{8}_s}$
   \end{itemize}
\end{prop}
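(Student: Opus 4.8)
The plan is to exploit the remark made just before the statement, namely that an $\So(N)$ subalgebra of $\Su(N)$ is precisely a faithful $N$-dimensional representation of $\So(N)$ whose image lies in $\Su(N)$, and to classify these up to conjugation. First I would observe that for any representation of the semisimple algebra $\So(N)$ the image automatically consists of traceless matrices, and that by Proposition~\ref{prop:unitary-equiv} the representation may be taken unitary (anti-Hermitian); hence having image in $\Su(N)$ is automatic once the representation is faithful and $N$-dimensional, and conjugation of subalgebras inside $\Su(N)$ corresponds exactly to equivalence of these representations. The task therefore reduces to listing, up to equivalence, the faithful $N$-dimensional representations of $\So(N)$. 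Decomposing such a representation into irreducibles on $\mathbb{C}^N$, so that the summand dimensions add up to $N$, faithfulness forces at least one nontrivial summand. Since $\So(N)$ is simple for $N\geq 3$ with $N\neq 4$, a single nontrivial summand already guarantees faithfulness, whereas for $N=4$, where $\So(4)\cong A_1\oplus A_1$, faithfulness demands that each simple factor act nontrivially.

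The generic case is $N\geq 7$ with $N\neq 8$. Here I would show that the minimal dimension of a nontrivial irreducible representation equals $N$, realized by the defining (vector) representation. By the strict monotonicity~\eqref{E:dimStrictIncr}, the smallest nontrivial irreducibles are among the fundamental representations, whose dimensions are recorded in~\eqref{E:dimOddAlg} and~\eqref{E:dimEvenAlg}: the antisymmetric-power fundamentals $\Gamma_{\omega_k}$ with $k>1$ exceed $N$, and the spinorial fundamentals, of dimension $2^{r}$ for $B_r=\So(2r+1)$ and $2^{r-1}$ for $D_r=\So(2r)$, are strictly larger than $N$ precisely once $N\geq 7$ and $N\neq 8$. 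Consequently every nontrivial summand already has dimension $\geq N$, so a faithful $N$-dimensional representation must be a single irreducible of dimension exactly $N$, with no room for additional trivial summands. By Proposition~\ref{prop:unique-irreps} that irreducible is the defining representation $\mathbf{N}$, which yields the stated uniqueness.

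It remains to treat the exceptional values $N=3,4,5,6,8$ one at a time through the exceptional isomorphisms. For $N=3$, $\So(3)\cong A_1$ has nontrivial irreps of dimensions $2,3,\ldots$, so the faithful $3$-dimensional representations are $\mathbf{3}$ and $\mathbf{2}+\mathbf{1}$. For $N=4$, writing $\mathbf{2}=(\mathbf{2},\mathbf{1})$, $\mathbf{2'}=(\mathbf{1},\mathbf{2})$ and $\mathbf{4}=(\mathbf{2},\mathbf{2})$, the only decompositions of total dimension $4$ on which both $A_1$ factors act are $\mathbf{4}$ and $\mathbf{2}+\mathbf{2'}$. For $N=5$, with $\So(5)\cong C_2$, the smallest nontrivial irreps are the spinor $\mathbf{4}$ and the vector $\mathbf{5}$, giving $\mathbf{5}$ and $\mathbf{4}+\mathbf{1}$. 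For $N=6$, with $\So(6)\cong A_3$, the smallest nontrivial irreps are the two spinors $\mathbf{4},\bar{\mathbf{4}}$ and the vector $\mathbf{6}$, producing $\mathbf{6}$, $\mathbf{4}+\mathbf{1}+\mathbf{1}$ and $\bar{\mathbf{4}}+\mathbf{1}+\mathbf{1}$. Finally, for $N=8$, with $\So(8)=D_4$, the smallest nontrivial irreps are the three $8$-dimensional representations $\mathbf{8}_v,\mathbf{8}_s,\mathbf{8}_c$ exchanged by triality, leaving no room for any reducible faithful $8$-dimensional representation, so the subalgebras are exactly these three, with $\mathbf{8}_s$ and $\mathbf{8}_c$ the exceptions beyond the defining $\mathbf{8}_v$.

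The main obstacle I anticipate is the generic dimension bound: establishing cleanly that no nontrivial irreducible of $\So(N)$ has dimension below $N$ for $N\geq 7$, $N\neq 8$ — in particular controlling the spinor dimensions $2^{r}$ and $2^{r-1}$ against $N$ and using~\eqref{E:dimStrictIncr} to reduce to the fundamental representations — together with the careful bookkeeping of faithfulness in the small exceptional cases, most delicately $N=4$, where simplicity fails and one must insist that each $A_1$ factor act nontrivially.
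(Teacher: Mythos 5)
Your proposal is correct and follows essentially the same route as the paper: reduce the statement to classifying faithful $N$-dimensional representations of $\So(N)$, use the fundamental-representation dimension formulas~\eqref{E:dimOddAlg}--\eqref{E:dimEvenAlg} together with the strict monotonicity~\eqref{E:dimStrictIncr} and Proposition~\ref{prop:unique-irreps} for the generic case, and dispose of $N=3,4,5,6,8$ via the exceptional isomorphisms, checking faithfulness (most delicately for $\So(4)\cong\Su(2)\oplus\Su(2)$). Two minor points to repair: the fact that the representation can be realized by Hermitian matrices inside $\Su(N)$ follows from compactness of $\So(N)$ (as the paper argues, citing the standard unitarizability of compact-algebra representations), not from Proposition~\ref{prop:unitary-equiv}, which only shows that an intertwiner between two already-Hermitian equivalent irreducibles can be taken unitary; and for $N=6$ you should justify, as the paper does via uniqueness of the isotypic decomposition, that $\mathbf{4}+\mathbf{1}+\mathbf{1}$ and $\mathbf{\bar{4}}+\mathbf{1}+\mathbf{1}$ are genuinely inequivalent.
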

\begin{proof}
 All subalgebras $\So(N)$ of $\Su(N)$ correspond to a faithful sum of irreducible representations, where the dimensions of the representations sum up to $N$. 
By the discussion of the possible dimensions of irreducible representations
of $\So(N)$ in the proof of Proposition \ref{prop:unique-irreps}, 
the defining representation $\mathbf{N}$ of $\So(N)$ is the only irreducible representation of
	dimension $\leq N$ for $N\geq 9$. 
	
	In the case $N=8$, the dimension formulas~\eqref{E:dimEvenAlg} show there are two additional
	8d irreducible representations. Both of these are faithful, and hence subalgebras, since $\So (8)$ is simple. 
	
	In the case $N=7$ the defining representation is the unique non-trivial representation of lowest dimension, as given by eqs.~\eqref{E:dimOddAlg} and~\eqref{E:dimStrictIncr}. 
	
	For $N=6$, $\So(6) \cong \Su(4)$ has a 4d irreducible representation
$\mathbf{4}$, which also has an inequivalent conjugate representation
$\mathbf{\bar{4}}$. These two inequivalent 4d representations will generate two 6d representations, as displayed in the Proposition. The latter are also inequivalent, because the 
decomposition into a direct sum of irreducible representations is essentially unique (up to a mixing of equivalent summands), since the 
"isotypic" decomposition is unique \cite{procesi2007lie}.
\texttt{LieART}~\cite{Feger:2019tvk} can be applied to check that there are no other
	irreducible representations of $\So(6)$ of dimension less than 6.
 
For $N=5$ there is also a $\mathbf{4}$ since $\So(5)_\mathbb{C} \cong \mathfrak{sp}(4)_\mathbb{C}$, and hence there exists a corresponding 4d irreducible representation 
	of $\So(5)_\mathbb{R}$ due to the 1-1 correspondence between complex representations of real and complex Lie algebras, even though $\So(5)_\mathbb{R} \ncong \mathfrak{sp}(4)_\mathbb{R} \cong \mathfrak{so}(2,3)$. 
	
	Moreover, for $\So(3)\cong\Su(2)$ and $\So(4)\cong \Su(2)\oplus \Su(2)$ we have a 3d representation and a 4d representation, respectively, built from the $\mathbf{2}\cong \mathbf{\bar 2}$ of $\Su(2)$. 
The algebra $\So(4)$ has two inequivalent, irreducible but unfaithful 2d representations, which we will denote $\mathbf{2}$ and $\mathbf{{2}'}$. One of these representations maps the first semisimple component of $\So(4)$ to 
		the Pauli matrices, while the other component is mapped to zero, and vice versa for the other representation.
		Then $\mathbf{2}+\mathbf{2'}$ is a reducible but faithful 4d representation of $\So(4)$, and hence corresponds to a $\So(4)$ subalgebra of $\Su(4)$. Representations
	of $\So(4)$	like $\mathbf{3}+\mathbf{1}$ and $\mathbf{{3'}}+\mathbf{1}$ are not faithful, and do hence not yield $\So(4)$ subalgebras. So  
	$\mathbf{2}+\mathbf{2'}$ is the only possible subalgebra for $N=4$,
	in addition to the defining representation $\mathbf{4}$.
	
	These are the only alternative $N$-dimensional faithful representations for these algebras and hence the only alternative $\So(N)$ subalgebras in $\Su(N)$. Their existence is due to the exceptional isomorphisms among the low-rank simple Lie algebras and in the very special case of $\So(8)$, the high symmetry of the $D_4$ Dynkin diagram~\cite{zee2016group}.

	Finally, all these representations of the compact Lie algebra $\So(N)$ may be written by hermitian matrices~\cite{Cornwell:1985xt}, and will hence exist in $\Su(N)$, just like any representation of a compact Lie group is equivalent to a unitary representation.
\end{proof}
The "exceptional'' subalgebras of Proposition~\ref{prop:subalgebras} are consistent with
the low $N$ subalgebra tables of~\cite{Feger:2019tvk} and~\cite{Lorente:1972xw}.
Ref.~\cite{Lorente:1972xw} lists complex subalgebras of complex, simple algebras, but for
$\Su(N)$, every semisimple complex subalgebra of the complexified algebra $\Su(N)_\mathbb{C}$ will correspond to a semisimple real subalgebra of the compact, real algebra $\Su(N)$, and vice versa.
The latter direction holds for all algebras, while compact $\Su(N)$ have real semisimple subalgebras in 1-1 correspondence with the semisimple complex subalgebras of
  $\Su(N)_\mathbb{C}$. The reason is a complex subalgebra $\mathfrak{h}_\mathbb{C} \subset \Su(N)_\mathbb{C}$ also is a faithful, complex representation of the subalgebra. And since there is a 1-1 correspondence between complex representations of real and complex variants of the algebras, there will also be a corresponding complex representation of the real, compact form $\mathfrak{h}$ of $\mathfrak{h}_\mathbb{C}$. This representation will
		exist in the real algebra $\Su(N)$, since every representation of a compact algebra is equivalent with a Hermitian representation~\cite{Cornwell:1985xt}, i.e.~it is found among the Hermitian matrices of $\Su(N)$.

\section{Algorithms}
\label{sect:algo}
We now present an algorithm which implements the necessary and sufficient condition of Theorem~\ref{thm:$CP2$}, in order to determine if an arbitrary potential has a $CP2$ symmetry. The algorithm works in two steps: first identifying the eigenvectors of $\Lambda$ which are orthogonal to both $L$ and $M$, and then searching for a set of eigenvectors that generates the defining representation of $\So(N)$.

\subsection[Finding all $LM$-orthogonal eigenvectors]{Finding all \boldmath $LM$-orthogonal eigenvectors}
\label{sect:algo-LM}
It is advantageous to start by checking the orthogonality conditions~(\ref{E:cond-LM}) first since that will reduce the number of candidates to be considered when searching for the defining representation of $\So(N)$ among the eigenvectors of $\Lambda$. 
These orthogonality conditions are straightforward to check, but care must be taken when there are eigenvalue subspaces of dimension larger than 1. Indeed, when this is the case, it may be that none of the degenerate eigenvectors are orthogonal to both $L$ and $M$, yet some linear combinations are. Thus the conditions should be checked in an appropriate eigenvector basis where for each eigenvalue subspace $W_\lambda$, corresponding to an eigenvalue $\lambda$, all the independent linear combinations orthogonal to $L$ and $M$ have been extracted. A practical procedure for achieving this is given in Algorithm~\ref{alg:check-LM}.

\begin{algorithm*}[!h]
\caption{Identifying $LM$-orthogonal eigenvectors}
\label{alg:check-LM}

\begin{enumerate}[align=left, label=\fbox{\arabic*}]
\item Given an NHDM potential, compute the $N^2-1$ eigenvectors of $\Lambda$
\item Initialize $B$, a maximal set of $LM$-orthogonal eigenvectors
\item For each eigenvalue subspace $W_\lambda$
\begin{itemize}
\item Solve $\bigg\{\begin{array}{cc}M\cdot X&=0\\L\cdot X&=0\end{array}$ for $X\in W_\lambda$
\item Add an orthonormal basis for the space of solutions to $B$ 
\end{itemize}
\item Return B
\end{enumerate}
\end{algorithm*}

\subsection[Detecting the defining representation of $\So(N)$ in $\Su(N)$]{Detecting the defining representation of  \boldmath $\So(N)$ in \boldmath $\Su(N)$}
\label{sect:algo-detect}

In this section we show an efficient algorithm for determining whether a set of eigenvectors contains a subset which forms a basis for the defining representation of $\So(N)$. 

The strategy is to first determine if a subset of $k$ eigenvectors of $\Lambda$ closes under the F-product i.e.~forms a $k$-dimensional subalgebra. If such a subalgebra exists, we must verify whether it is the $\So(N)$ algebra or some other $k$-dimensional subalgebra, since e.g.~both $\So(5)$ and $\Su(3)\oplus \mathfrak{u}(1)\oplus \mathfrak{u}(1)$ are 10-dimensional subalgebras of $\Su(5)$. For even $N=2r$, computing the rank of the unknown algebra using the method described in~\ref{sec:algebra-id} is enough to unambiguously identify $\So(2r)$, as it is the only\footnote{We have checked this up to $N=22$ (rank 11) by exhausting all the possible semisimple Lie algebras of rank $r$ and dimension $r(2r-1)$ and checking against the $\Su(N)$ subalgebra tables of~\cite{Feger:2019tvk}. For $N\geq 24$, there may be subalgebras of same dimension and rank as $\So(N)$ and one has to look at the root systems.} subalgebra of $\Su(2r)$ with dimension $k$ and rank $r$. For odd $N=2r+1$, $\Su(2r+1)$ always has, in addition to $\So(2r+1)$, at least an $\mathfrak{sp}(2r)$ subalgebra which has the same dimension and rank. Thus, beyond $r=1$ and $r=2$ where one has the isomorphisms $\So(3)\cong \mathfrak{sp}(2)$ and $\So(5) \cong \mathfrak{sp}(4)$, the root system of the unknown algebra must be computed in order to establish that it is $\So(2r+1)$. 

Having identified an $\So(N)$ subalgebra, it remains to check whether it corresponds to the defining representation. As shown in Proposition~\ref{prop:subalgebras}, unless $N=3,4,5,6,8$ the defining representation is the only $\So(N)$ subalgebra in $\Su(N)$ and we are done. For the remaining special values, the representation must be identified by computing its highest weights.

The complete procedure for arbitrary $N$ is given in Algorithm~\ref{alg:check-defining}.

\begin{algorithm*}[!h]
\caption{Checking if eigenvectors generate the defining representation of $\So(N)$}
\label{alg:check-defining}

\begin{enumerate}[align=left, label=\fbox{\arabic*}]
\item Input a set of orthonormal eigenvectors.
\item If there exists a subset of $k$ eigenvectors $\{v_a\}$ forming a basis for a subalgebra $\mathfrak{g}~\subset~\Su(N)$ (see \ref{sec:subalgebra-test}), proceed. Else, return False. 
\item Compute $r\equiv \text{Rank}(\mathfrak{g})$ (see~\ref{sec:algebra-id}).
\begin{itemize}
\item If $N$ is odd and $r=\frac{N-1}{2}$, proceed.
\item If $N$ is even and $r=\frac{N}{2}$, go to \fbox{5} if $r\leq 11$, else proceed.
\item Else return False.
\end{itemize}
\item If the root system of $\mathfrak{g}$ is that of $\So(N)$ (see \ref{sec:algebra-id}), proceed. Else, return False.
\item If $N\neq 3,4,5,6,8$, return True. Else, compute the highest weights of the $N$-dimensional representation of $\So(N)$.
\item If the highest weight is that of the defining representation (see \ref{sec:algebra-id}), return True. Else return False.
\end{enumerate}
\end{algorithm*}

\subsubsection{Testing for a subalgebra}
\label{sec:subalgebra-test}

In order to implement Algorithm~\ref{alg:check-defining}, one must be able to detect whether a subset of vectors forms a basis for a subalgebra, which can be done by considering the structure constants. Let $\{v_i\}_{i=1}^{N^2-1}$ be an orthonormal set of eigenvectors of the real symmetric matrix $\Lambda$, the structure constants in that basis of $\Su(N)$ are given by
\begin{equation}
\label{eq:struct-eigen}
Z_{ijk} \equiv F^{(v_i,v_j)} \cdot v_k = \frac{-i}{4}\Tr\big([V_i,V_j]V_k\big),
\end{equation}
and the closure of a subalgebra generated by a subset $\{v_a\}_{a\in I}$, $I\subset \{1,\ldots ,N^2-1\}$ means that
\begin{equation}
\label{eq:subalg-pattern}
Z_{abc} = 0 \quad \forall a,b \in I,\, c\notin I.
\end{equation}
The presence of such a pattern in the structure constants is typically easy to detect, except in the isolated cases where the structure constants array $Z$ is sparse. This happens for instance when the matrix $\Lambda$ is exactly diagonal in some basis in which case we have $Z_{ijk}=f_{ijk}$ and it becomes difficult to identify the pattern~(\ref{eq:subalg-pattern}) among the many zeroes of $Z$, without resorting to brute-force checking all the possible 
subsets of eigenvectors. In the context of a uniform numerical scan this is not an issue, since parameter points corresponding to exactly diagonal $\Lambda$ matrices are a measure zero parameter space subset, and hence  in practice are almost never sampled. In a more general setting, one can deduce from the sparsity of $Z$ that the potential takes a very simple form in some basis, and thus is likely to have large symmetries. A case-by-case analysis may be necessary to identify these symmetries when the number of doublets is too large to check for the pattern~(\ref{eq:subalg-pattern}) using brute-force. 

\subsection{Numerical considerations}

Some comments about the practical implementation of $CP2$ detection by means of Algorithms~\ref{alg:check-LM} and~\ref{alg:check-defining} are in order. First, all the steps in these algorithms are linear algebra computations which can, in principle, be carried out analytically. However, a complete analytic treatment would require very simple expressions for the eigenvectors of $\Lambda$ which is unlikely to be the case for non-trivial potentials. Thus, in practice, a numerical implementation of the algorithms is most relevant.   

Secondly, it is often the case (e.g.~in a uniform parameter scan) that a symmetry cannot truly be exact. This can be due to the symmetric subset of the parameter space having measure zero, or simply finite numerical precision. Either way, an appropriate numerical tolerance has to be defined such that parameter points which are sufficiently close to exact symmetry are considered symmetric. We want to emphasize that Algorithms~\ref{alg:check-LM} and~\ref{alg:check-defining} can be implemented with such a tolerance in order to detect parameter points close to exact $CP2$ symmetry. Indeed, if the tolerance is encoded by a small number $\epsilon$, then it suffices to neglect all numbers smaller than $\epsilon$ in the numerical computations. 

Lastly, even though large values of $N$ have limited practical applications to e.g.~phenomenology, one might wonder about the computational cost of Algorithm~\ref{alg:check-defining} and how it scales with $N$. The most expensive step is checking for a subalgebra since that requires the computation of the structure constants~(\ref{eq:struct-eigen})\footnote{With the exception of the pathological cases discussed in~\ref{sec:subalgebra-test}.} for which the required number of operations scales as $N^{12}$. While the computation time increases fast, the presence or absence of $CP2$ can be established almost instantaneously for $N=3$ and $N=4$ doublets which are, arguably, the most important use cases.

\section{Examples}
\label{sect:examples}
We now illustrate how our $CP2$ detection method, summarized in Algorithm~\ref{alg:$CP2$}, can be applied concretely to determine whether a particular instance of a NHDM potential has a $CP2$ symmetry.

\begin{algorithm*}[!h]
\caption{Detecting $CP2$ symmetry}
\label{alg:$CP2$}

\begin{enumerate}[align=left, label=\fbox{\arabic*}]
\item Let $B$ be the result of Algorithm~\ref{alg:check-LM}. If $B$ contains less than $k$ eigenvectors, return False. Otherwise proceed.
\item If Algorithm~\ref{alg:check-defining} applied to $B$ returns True then return True. Else return False.
\end{enumerate}
\end{algorithm*}

\subsection[$N=3:$ The Ivanov-Silva potential]{\boldmath $N=3:$ The Ivanov-Silva potential}
\label{sec:N3}
The Ivanov-Silva potential is an example of a model with a $CP4$ symmetry but no $CP2$ symmetry, and hence a $CP$-conserving potential without a real basis~\cite{Ivanov_2016}. Consider a particular numerical instance of this potential in a basis where the existence of neither a $CP2$ or $CP4$ symmetry is obvious, given by the following parameters

\begin{align}
\label{eq:N3-Lambda}
\Lambda &= \left(
\begin{array}{cccccccc}
 -22 & -4 \sqrt{3} & 2 & 0 & 2 \sqrt{3} & -12 & 2 \sqrt{3} & 8 \\
 -4 \sqrt{3} & -14 & -6 \sqrt{3} & 4 \sqrt{3} & 6 & 2 \sqrt{3} & 4 & -8 \sqrt{3} \\
 2 & -6 \sqrt{3} & -2 & 0 & 6 \sqrt{3} & -18 & 2 \sqrt{3} & 0 \\
 0 & 4 \sqrt{3} & 0 & 16 & 0 & -4 & 0 & 0 \\
 2 \sqrt{3} & 6 & 6 \sqrt{3} & 0 & 10 & 6 \sqrt{3} & 6 & 0 \\
 -12 & 2 \sqrt{3} & -18 & -4 & 6 \sqrt{3} & -10 & 4 \sqrt{3} & -24 \\
 2 \sqrt{3} & 4 & 2 \sqrt{3} & 0 & 6 & 4 \sqrt{3} & -18 & -\frac{8}{\sqrt{3}} \\
 8 & -8 \sqrt{3} & 0 & 0 & 0 & -24 & -\frac{8}{\sqrt{3}} & -\frac{40}{3} \\
\end{array}
\right)\\
L &= \left(
\begin{array}{cccccccc}
 -\frac{16}{\sqrt{3}} &
 0 &
 0 &
 0 &
 0 &
 0 &
 \frac{16}{3} &
 \frac{32}{3 \sqrt{3}} 
\end{array}
\right)\\
M &= \left(
\begin{array}{cccccccc}
 -8 \sqrt{3} &
 0 &
 0 &
 0 &
 0 &
 0 &
 8 &
 \frac{16}{\sqrt{3}} 
\end{array}
\right)\\
\label{eq:N3-M0}
\Lambda_0 &= -\frac{64}{9}\, ,\quad M_0 = -\frac{112}{3}.
\end{align}

Applying Algorithm~\ref{alg:$CP2$}, we start by looking for a maximal set of $LM$-orthogonal eigenvectors which may contain up to seven elements since in this particular case $L$ and $M$ happen to be colinear. Now, $\Lambda$ has two 2d eigenvalue subspaces, both being $LM$-orthogonal, while all remaining 1d eigenvalue spaces except one, are $LM$-orthogonal. Thus one finds that the eigenvectors associated with the eigenvalues       
\begin{equation}
-48,\,-8\sqrt{5},\,-8,\,8\sqrt{5},\,32,
\end{equation}
where eigenvalues $\pm 8\sqrt{5}$ have multiplicity 2, form a set of maximal $LM$-orthogonal eigenvectors which we denote $\{v_a\}_{a=1}^7$. Eq.~(\ref{eq:N3-Z}) below shows the structure constants $Z_{(ab)c}$ in this basis of eigenvectors, arranged as a matrix with non-zero elements denoted by $\ast$.
\begin{equation}
\label{eq:N3-Z}
Z_{(ab)c}=F^{(v_a,v_b)}\cdot v_c = 
\begin{blockarray}{cccccccc}
\textcolor{blue}{\scriptstyle{v_1}} & \scriptstyle{v_2} & \scriptstyle{v_3} & \textcolor{blue}{\scriptstyle{v_4}} & \scriptstyle{v_5} & \scriptstyle{v_6} & \textcolor{blue}{\scriptstyle{v_7}} \\
\begin{block}{(ccccccc)c}
 0 & 0 & \ast & 0 & \ast & \ast & 0 & \scriptstyle{F^{(v_1,v_2)}} \\
 0 & \ast & 0 & 0 & \ast & \ast & 0 & \scriptstyle{F^{(v_1,v_3)}} \\
 0 & 0 & 0 & 0 & 0 & 0 & \ast & \textcolor{blue}{\scriptstyle{F^{(v_1,v_4)}}} \\
 0 & \ast & \ast & 0 & 0 & \ast & 0 & \scriptstyle{F^{(v_1,v_5)}} \\
 0 & \ast & \ast & 0 & \ast & 0 & 0 & \scriptstyle{F^{(v_1,v_6)}} \\
 0 & 0 & 0 & \ast & 0 & 0 & 0 & \textcolor{blue}{\scriptstyle{F^{(v_1,v_7)}}}\\
 \ast & 0 & 0 & \ast & 0 & 0 & \ast & \scriptstyle{F^{(v_2,v_3)}} \\
 0 & 0 & \ast & 0 & \ast & 0 & 0 & \scriptstyle{F^{(v_2,v_4)}} \\
 \ast & 0 & 0 & \ast & 0 & 0 & \ast & \scriptstyle{F^{(v_2,v_5)}} \\
 \ast & 0 & 0 & 0 & 0 & 0 & \ast & \scriptstyle{F^{(v_2,v_6)}} \\
 0 & 0 & \ast & 0 & \ast & \ast & 0 & \scriptstyle{F^{(v_2,v_7)}} \\
 0 & \ast & 0 & 0 & 0 & \ast & 0 & \scriptstyle{F^{(v_3,v_4)}} \\
 \ast & 0 & 0 & 0 & 0 & 0 & \ast & \scriptstyle{F^{(v_3,v_5)}} \\
 \ast & 0 & 0 & \ast & 0 & 0 & \ast & \scriptstyle{F^{(v_3,v_6)}} \\
 0 & \ast & 0 & 0 & \ast & \ast & 0 & \scriptstyle{F^{(v_3,v_7)}} \\
 0 & \ast & 0 & 0 & 0 & \ast & 0 & \scriptstyle{F^{(v_4,v_5)}} \\
 0 & 0 & \ast & 0 & \ast & 0 & 0 & \scriptstyle{F^{(v_4,v_6)}} \\
 \ast & 0 & 0 & 0 & 0 & 0 & 0 & \textcolor{blue}{\scriptstyle{F^{(v_4,v_7)}}} \\
 \ast & 0 & 0 & \ast & 0 & 0 & \ast & \scriptstyle{F^{(v_5,v_6)}} \\
 0 & \ast & \ast & 0 & 0 & \ast & 0 & \scriptstyle{F^{(v_5,v_7)}} \\
 0 & \ast & \ast & 0 & \ast & 0 & 0 & \scriptstyle{F^{(v_6,v_7)}} \\
\end{block}
\end{blockarray}
\end{equation}
It is now easy to isolate which subsets of 3 eigenvectors may close under the F-product. Indeed two eigenvectors can only be a basis for a 3d subalgebra if their F-product has components along no more than one other eigenvector. By implementing this criteria one avoids to blindly check all ${\small \binom{7}{3}}=35$ possible subsets for closure under the F-product. In the example at hand, this analysis reveals that 
\begin{equation}
\label{eq:N3-subalg-ex}
(V_1,\,V_4,\,V_7)
\end{equation}
forms a 3d, rank-1 subalgebra of $\Su(3)$ which must then be $\So(3)$. It remains to identify which representation is generated by~(\ref{eq:N3-subalg-ex}) by computing the Dynkin labels. Without loss of generality, take $V_7$ as the basis for a Cartan subalgebra and let $\tilde V_i$ be the matrices in the basis where $V_7$ is diagonal. The only positive simple root is then given by 
\begin{equation}
E_+ = \tilde V_1 + i \tilde V_4
\end{equation}
and has two orthogonal nullvectors $u_1$ and $u_2$ satisfying
\begin{align}
E_- u_1 &= 0 \\
E_-^2 u_2 &= 0
\end{align}
showing that the representation has two highest weights with Dynkin labels 0 and 1. The potential given by~(\ref{eq:N3-Lambda})--(\ref{eq:N3-M0})   has therefore, as expected, no $CP2$ symmetry since $\Lambda$ has three eigenvectors forming a basis for the representation $\mathbf{2+1}$ of $\So(3)$ while it is $\mathbf{3}$ which corresponds to $CP2$. Actually the representation $\mathbf{2+1}$, accompanied by the $LM$-orthogonality conditions, corresponds to a different block structure which partially characterizes $CP4$ in 3HDMs~\cite{Ivanov:2018ime}.

\subsection[$N=4: \:\mathbb{Z}_6$-symmetric potential]{\boldmath $N=4: \:\mathbb{Z}_6$-symmetric potential}
\label{sec:N4}

As an example with four doublets, we now study a $\mathbb{Z}_6$-symmetric 4HDM potential which has both non-$CP2$ and $CP2$-symmetric parameter points~\cite{Shao:2023oxt}. Consider the following numerical instance 
\begin{align}
\label{eq:N4-Lambda}
\Lambda &= 
\left(
\begin{array}{ccccccccccccccc}
 \scriptstyle{\frac{3}{8}} & \scriptstyle{0} & \scriptstyle{0} & \scriptstyle{0} & \scriptstyle{0} & \scriptstyle{-\frac{7}{8}} & \scriptstyle{0} & \scriptstyle{0} & \scriptstyle{0} & \scriptstyle{0} & \scriptstyle{0} & \scriptstyle{0} & \scriptstyle{0} & \scriptstyle{0} & \scriptstyle{0} \\
 \scriptstyle{0} & \scriptstyle{\frac{1}{4}} & \scriptstyle{-\frac{\sqrt{3}}{16}} & \scriptstyle{-\frac{\sqrt{3}}{16}} & \scriptstyle{\frac{1}{4}} & \scriptstyle{0} & \scriptstyle{\frac{\sqrt{3}}{16}} & \scriptstyle{0} & \scriptstyle{0} & \scriptstyle{0} & \scriptstyle{0} & \scriptstyle{\frac{\sqrt{3}}{16}} & \scriptstyle{\frac{1}{16}} & \scriptstyle{\frac{\sqrt{3}}{16}} & \scriptstyle{-\frac{\sqrt{\frac{3}{2}}}{8}} \\
 \scriptstyle{0} & \scriptstyle{-\frac{\sqrt{3}}{16}} & \scriptstyle{\frac{1}{4}} & \scriptstyle{-\frac{1}{2}} & \scriptstyle{\frac{\sqrt{3}}{16}} & \scriptstyle{0} & \scriptstyle{-\frac{1}{16}} & \scriptstyle{0} & \scriptstyle{0} & \scriptstyle{0} & \scriptstyle{0} & \scriptstyle{\frac{3}{16}} & \scriptstyle{\frac{\sqrt{3}}{16}} & \scriptstyle{-\frac{1}{16}} & \scriptstyle{\frac{1}{8 \sqrt{2}}} \\
 \scriptstyle{0} & \scriptstyle{-\frac{\sqrt{3}}{16}} & \scriptstyle{-\frac{1}{2}} & \scriptstyle{\frac{1}{4}} & \scriptstyle{\frac{\sqrt{3}}{16}} & \scriptstyle{0} & \scriptstyle{-\frac{1}{16}} & \scriptstyle{0} & \scriptstyle{0} & \scriptstyle{0} & \scriptstyle{0} & \scriptstyle{\frac{3}{16}} & \scriptstyle{\frac{\sqrt{3}}{16}} & \scriptstyle{-\frac{1}{16}} & \scriptstyle{\frac{1}{8 \sqrt{2}}} \\
 \scriptstyle{0} & \scriptstyle{\frac{1}{4}} & \scriptstyle{\frac{\sqrt{3}}{16}} & \scriptstyle{\frac{\sqrt{3}}{16}} & \scriptstyle{\frac{1}{4}} & \scriptstyle{0} & \scriptstyle{-\frac{\sqrt{3}}{16}} & \scriptstyle{0} & \scriptstyle{0} & \scriptstyle{0} & \scriptstyle{0} & \scriptstyle{-\frac{\sqrt{3}}{16}} & \scriptstyle{-\frac{1}{16}} & \scriptstyle{-\frac{\sqrt{3}}{16}} & \scriptstyle{\frac{\sqrt{\frac{3}{2}}}{8}} \\
 \scriptstyle{-\frac{7}{8}} & \scriptstyle{0} & \scriptstyle{0} & \scriptstyle{0} & \scriptstyle{0} & \scriptstyle{\frac{3}{8}} & \scriptstyle{0} & \scriptstyle{0} & \scriptstyle{0} & \scriptstyle{0} & \scriptstyle{0} & \scriptstyle{0} & \scriptstyle{0} & \scriptstyle{0} & \scriptstyle{0} \\
 \scriptstyle{0} & \scriptstyle{\frac{\sqrt{3}}{16}} & \scriptstyle{-\frac{1}{16}} & \scriptstyle{-\frac{1}{16}} & \scriptstyle{-\frac{\sqrt{3}}{16}} & \scriptstyle{0} & \scriptstyle{-\frac{1}{8}} & \scriptstyle{0} & \scriptstyle{0} & \scriptstyle{0} & \scriptstyle{0} & \scriptstyle{-\frac{3}{8}} & \scriptstyle{-\frac{\sqrt{3}}{16}} & \scriptstyle{\frac{1}{16}} & \scriptstyle{-\frac{1}{8 \sqrt{2}}} \\
 \scriptstyle{0} & \scriptstyle{0} & \scriptstyle{0} & \scriptstyle{0} & \scriptstyle{0} & \scriptstyle{0} & \scriptstyle{0} & \scriptstyle{\frac{11}{8}} & \scriptstyle{0} & \scriptstyle{0} & \scriptstyle{-\frac{1}{8}} & \scriptstyle{0} & \scriptstyle{0} & \scriptstyle{0} & \scriptstyle{0} \\
 \scriptstyle{0} & \scriptstyle{0} & \scriptstyle{0} & \scriptstyle{0} & \scriptstyle{0} & \scriptstyle{0} & \scriptstyle{0} & \scriptstyle{0} & \scriptstyle{\frac{1}{2}} & \scriptstyle{1} & \scriptstyle{0} & \scriptstyle{0} & \scriptstyle{0} & \scriptstyle{0} & \scriptstyle{0} \\
 \scriptstyle{0} & \scriptstyle{0} & \scriptstyle{0} & \scriptstyle{0} & \scriptstyle{0} & \scriptstyle{0} & \scriptstyle{0} & \scriptstyle{0} & \scriptstyle{1} & \scriptstyle{\frac{1}{2}} & \scriptstyle{0} & \scriptstyle{0} & \scriptstyle{0} & \scriptstyle{0} & \scriptstyle{0} \\
 \scriptstyle{0} & \scriptstyle{0} & \scriptstyle{0} & \scriptstyle{0} & \scriptstyle{0} & \scriptstyle{0} & \scriptstyle{0} & \scriptstyle{-\frac{1}{8}} & \scriptstyle{0} & \scriptstyle{0} & \scriptstyle{\frac{11}{8}} & \scriptstyle{0} & \scriptstyle{0} & \scriptstyle{0} & \scriptstyle{0} \\
 \scriptstyle{0} & \scriptstyle{\frac{\sqrt{3}}{16}} & \scriptstyle{\frac{3}{16}} & \scriptstyle{\frac{3}{16}} & \scriptstyle{-\frac{\sqrt{3}}{16}} & \scriptstyle{0} & \scriptstyle{-\frac{3}{8}} & \scriptstyle{0} & \scriptstyle{0} & \scriptstyle{0} & \scriptstyle{0} & \scriptstyle{\frac{3}{8}} & \scriptstyle{-\frac{\sqrt{3}}{16}} & \scriptstyle{\frac{1}{16}} & \scriptstyle{-\frac{1}{8 \sqrt{2}}} \\
 \scriptstyle{0} & \scriptstyle{\frac{1}{16}} & \scriptstyle{\frac{\sqrt{3}}{16}} & \scriptstyle{\frac{\sqrt{3}}{16}} & \scriptstyle{-\frac{1}{16}} & \scriptstyle{0} & \scriptstyle{-\frac{\sqrt{3}}{16}} & \scriptstyle{0} & \scriptstyle{0} & \scriptstyle{0} & \scriptstyle{0} & \scriptstyle{-\frac{\sqrt{3}}{16}} & \scriptstyle{\frac{1}{2}} & \scriptstyle{\frac{1}{4 \sqrt{3}}} & \scriptstyle{-\frac{1}{2 \sqrt{6}}} \\
 \scriptstyle{0} & \scriptstyle{\frac{\sqrt{3}}{16}} & \scriptstyle{-\frac{1}{16}} & \scriptstyle{-\frac{1}{16}} & \scriptstyle{-\frac{\sqrt{3}}{16}} & \scriptstyle{0} & \scriptstyle{\frac{1}{16}} & \scriptstyle{0} & \scriptstyle{0} & \scriptstyle{0} & \scriptstyle{0} & \scriptstyle{\frac{1}{16}} & \scriptstyle{\frac{1}{4 \sqrt{3}}} & \scriptstyle{\frac{1}{2}} & \scriptstyle{\frac{1}{2 \sqrt{2}}} \\
 \scriptstyle{0} & \scriptstyle{-\frac{\sqrt{\frac{3}{2}}}{8}} & \scriptstyle{\frac{1}{8 \sqrt{2}}} & \scriptstyle{\frac{1}{8 \sqrt{2}}} &
   \scriptstyle{\frac{\sqrt{\frac{3}{2}}}{8}} & \scriptstyle{0} & \scriptstyle{-\frac{1}{8 \sqrt{2}}} & \scriptstyle{0} & \scriptstyle{0} & \scriptstyle{0} & \scriptstyle{0} & \scriptstyle{-\frac{1}{8 \sqrt{2}}} &
   \scriptstyle{-\frac{1}{2 \sqrt{6}}} & \scriptstyle{\frac{1}{2 \sqrt{2}}} & \scriptstyle{\frac{1}{4}} \\
\end{array}
\right) \\
L &= M = 0 \\
\Lambda_0 &= \frac{5}{4} \, ,\quad M_0 = -1
\end{align}
which, in the notation of~\cite{Shao:2023oxt}, corresponds to the couplings taking on the values 
\begin{align}
m^2 &=1, \quad \Lambda = 1, \quad \Lambda' = 2, \quad \Lambda'' = 3, \quad \tilde\Lambda' = 4, \quad \tilde\Lambda'' = -\frac{1}{2}, \quad \lambda_1 = i,\notag\\
\lambda_2 &=i, \quad \lambda_3 =e^{i\frac{2\pi}{3}}, \quad \lambda_4 =1, \quad   \lambda_5 = 2, \quad
\end{align}
and transformed to a different basis. 

Since $L=M=0$, the orthogonality conditions are automatically satisfied by all eigenvectors of~(\ref{eq:N4-Lambda}) and we proceed to compute the $\Su(N)$ structure constants $Z_{(ij)k}$ in the basis of eigenvectors, in order to look for 6d subalgebras. Due to its large size, displaying the $Z$-matrix is impractical and not very illuminating, therefore we omit it. Nevertheless, the subalgebra search is easily done using a computer program to implement the strategies explained in section~\ref{sec:N3}, and it is found that two sets of eigenvectors generate 6d subalgebras of rank 2, for which the only
candidate is $\So(4)$. Thus it is not necessary to compute the root system, and it only remains to identify which 4d representations have been found. From Proposition~\ref{prop:subalgebras}, the only possibilities are $\mathbf{4}$, i.e.~the defining representation, and the reducible representation $\mathbf{2+2'}$. Computing the Dynkin labels of both representations as described in section~\ref{sec:algebra-id} one finds
\begin{equation}
(1,0)+(0,1)\sim \mathbf{2+2'},
\end{equation}
which is the aforementioned reducible representation, and
\begin{equation}\label{E:114}
(1,1)\sim \mathbf{4}
\end{equation}
which is the defining representation, as can be verified using e.g.~\texttt{LieART}~\cite{Feger:2019tvk}. The detection of the defining representation~\eqref{E:114} implies the existence of a $CP2$ symmetry for this parameter point.
\subsection[$N=7$]{\boldmath $N=7$}
\label{sec:N7}
As a last example which, while mostly academic, shows the power of this method for $CP2$ detection, we apply Algorithm~\ref{alg:$CP2$} to a 7HDM potential whose parameter values are given in appendix~\ref{app:N7-values}. From a Lie algebraic perspective, $N=7$ is interesting as it is the first value where there exists a semisimple Lie algebra with the same dimension and rank as $\So(N)$, but which is not isomorphic to it, namely $\mathfrak{sp}(6)$.

Algorithm~\ref{alg:check-LM} reveals that a maximal set of orthonormal eigenvectors satisfying $LM$-orthogonality has 41 elements. With so many candidate eigenvectors, searching for the subalgebra pattern~(\ref{eq:subalg-pattern}) in $\Su(7)$ starts to become computationally expensive. For reference, running our implementation of Algorithm~\ref{alg:$CP2$} on an ordinary computer it takes less than a minute to find that 21 eigenvectors close under the F-product, forming a 21d rank-3 subalgebra. The root systems of the two possible algebras, $\So(7)$ and $\mathfrak{sp}(6)$, which differ only by the lengths of the roots, are shown in figure~\ref{fig:B3vsC3}. In the example at hand one finds that the root system of the algebra to be identified is in fact that of $\mathfrak{sp}(6)$, and hence the corresponding potential has no $CP2$ symmetry.

\section{Summary}
\label{sec:Summary}

We have derived necessary and sufficient conditions for an NHDM potential to admit a $CP2$ symmetry, which are formulated as relations among vectors that transform according to the adjoint representation under an $\SU(N)$ change of doublet basis. Such vectors can naturally be thought of elements of $\Su(N)$, which allows one to use the Lie algebra structure to verify basis-invariant properties such as being related to a particular subspace in the adjoint space. In the case of $CP2$, the relevant subspace actually corresponds to a $\So(N)$ subalgebra of $\Su(N)$ and the main task for detecting this symmetry is checking whether a subset of eigenvectors of the bilinear quadratic form $\Lambda$ generates the defining representation of $\So(N)$. By considering all the $k$-dimensional subalgebras of $\Su(N)$ and all the $N$-dimensional representations of $\So(N)$ we developed an optimized computable algorithm for this task. The complete algorithm for detecting $CP2$ works in principle for any number of doublets $N$, and is only limited by computational cost. We find that, running our algorithm on a regular desktop computer, a generic parameter space point can be labelled $CP2$-conserving or $CP2$-violating in less than a minute for $N\leq 7$. However, when a $CP2$ symmetry exists, finding a real basis explicitly in general remains out of reach.

\section*{Acknowledgements}

RP is grateful to Igor P.~Ivanov, Celso C.~Nishi and Andreas Trautner for stimulating
discussions and helpful comments which enhanced his understanding of
covariants-based methods.
MS would like to thank Igor P.~Ivanov for suggesting ref.~\cite{deMedeirosVarzielas:2019rrp} and the covariant method regarding detection of the custodial symmetry, which also had a great influence on the present work.

\appendix
\section{Additional mathematical results}
\label{app:extra-results}

\begin{prop}
\label{prop:unitary-equiv}
Let $\{X_a\}$ and $\{Y_a\}$ be bases of two equivalent, Hermitian, irreducible and complex representations of the same Lie algebra i.e.~there exists an invertible $S$ such that $Y_a = SX_aS^{-1}$ for all $a$. Then $S$ can be chosen to be special unitary.
\end{prop}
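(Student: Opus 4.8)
The plan is to exploit the Hermiticity of both representations to pin down the combination $S^\dagger S$, and then invoke Schur's lemma. First I would take the Hermitian conjugate of the intertwining relation $Y_a = S X_a S^{-1}$. Since $X_a^\dagger = X_a$ and $Y_a^\dagger = Y_a$ by the Hermiticity assumption, this yields $Y_a = (S^\dagger)^{-1} X_a S^\dagger$. Comparing with the original relation $Y_a = S X_a S^{-1}$ gives $S X_a S^{-1} = (S^\dagger)^{-1} X_a S^\dagger$, which rearranges to
\[
(S^\dagger S)\, X_a = X_a\, (S^\dagger S), \qquad \forall a,
\]
so that the positive-definite Hermitian matrix $S^\dagger S$ commutes with every generator of the representation $\{X_a\}$.

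Since $\{X_a\}$ is irreducible, Schur's lemma forces $S^\dagger S = \lambda I$ for some scalar $\lambda$; because $S^\dagger S$ is positive definite, $\lambda$ is a strictly positive real number. Rescaling, I set $S' = \lambda^{-1/2} S$, which satisfies ${S'}^\dagger S' = I$ and is hence unitary, while still intertwining the representations since the positive scalar cancels: $S' X_a {S'}^{-1} = S X_a S^{-1} = Y_a$.

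To upgrade $S'$ to a \emph{special} unitary intertwiner, I would then absorb its determinant phase. As $S'$ is unitary, $\det S' = e^{i\theta}$ for some real $\theta$; setting $S'' = e^{-i\theta/n} S'$, where $n$ is the dimension of the representation space, gives $\det S'' = 1$ while leaving the intertwining relation untouched, since the overall phase commutes through. Thus $S''$ is the desired special unitary matrix with $Y_a = S'' X_a (S'')^{-1}$.

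There is no serious obstacle here; the one point requiring care is the application of Schur's lemma, which needs the representation to be irreducible (as assumed) and the commuting matrix to be a genuine self-intertwiner of the representation — which is exactly what the computation above establishes. The positivity of $\lambda$, guaranteeing a real square root, and the existence of an $n$-th root of the determinant phase are then immediate, so the rescaling and phase adjustment are routine.
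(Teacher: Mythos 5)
Your proof is correct and follows essentially the same route as the paper: Hermitian conjugation of the intertwining relation to show $S^\dagger S$ commutes with every $X_a$, Schur's lemma (using irreducibility and complexity) to conclude $S^\dagger S = \lambda I$ with $\lambda>0$ by positive definiteness, rescaling to a unitary intertwiner, and finally stripping off the determinant phase to make it special unitary. The only cosmetic difference is how the last step is phrased — you divide by an $n$-th root of $\det S'$, while the paper writes the unitary matrix as a phase times a special unitary matrix — which amounts to the same thing.
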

\begin{proof}
We have
\begin{align*}
&Y_a^\dagger = (S^{-1})^\dagger X_a^\dagger S^\dagger =  (S^{-1})^\dagger X_a S^\dagger = Y_a = SX_aS^{-1} \\
&\implies X_a S^\dagger S = S^\dagger S X_a
\end{align*}
for all $a$. The matrix $S^\dagger S$ thus commutes with all the elements of an irreducible complex representation and hence, by Schur's lemma, $S^\dagger S$ must be proportional to the identity. Let $\lambda$ be the proportionality constant, then $\lambda > 0$ since it is an eigenvalue of the positive definite matrix $S^\dagger S$. Then the rescaled matrix ${S}/{\sqrt{\lambda}}$ is unitary, and may always be written as a special unitary matrix $U$ times a complex phase
$e^{i\theta}$. Hence $S=\sqrt{\lambda}\, e^{i\theta} U$ while $S^{-1}= U^\dagger /(\sqrt{\lambda}\, e^{i\theta})$, and the result follows.

\end{proof}
An immediate consequence of Proposition~\ref{prop:unitary-equiv} is then the following:
\begin{prop}
\label{prop:unitary-equivRedIrred}
Two equivalent, irreducible representations of $\mathfrak{so}(N)$ contained in $\mathfrak{su}(N)$
may always be related by a similarity transformation given by a unitary matrix $U$.
\end{prop}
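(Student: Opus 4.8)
The plan is to recognize this statement as a direct corollary of Proposition~\ref{prop:unitary-equiv}, so that the entire task reduces to checking that the hypotheses of that proposition are satisfied by the two representations under consideration. First I would fix bases $\{X_a\}$ and $\{Y_a\}$ for the two given equivalent irreducible representations of $\So(N)$. The key observation is that both representations are, by assumption, \emph{contained in} $\Su(N)$: every representation matrix is an element of $\Su(N)$. In the Hermitian (physicist's) convention adopted throughout the paper, where $\Su(N)$ is spanned by the Hermitian generalized Gell-Mann matrices, this means precisely that the matrices $X_a$ and $Y_a$ are Hermitian. Hence both representations are Hermitian in the sense demanded by Proposition~\ref{prop:unitary-equiv}.

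Next I would bring in the remaining hypotheses. By assumption the two representations are equivalent, so there exists an invertible intertwiner $S$ with $Y_a = S X_a S^{-1}$ for all $a$; they are irreducible by assumption; and they are complex, since they act on a complex vector space. With Hermiticity, irreducibility, complexity, and equivalence all in place, the four premises of Proposition~\ref{prop:unitary-equiv} are met, and that proposition immediately guarantees that $S$ may be chosen special unitary, in particular unitary. This is exactly the claimed similarity transformation $U$, so the proof is complete.

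The only point that genuinely requires care — and the step I would regard as the crux, though it is a slight one — is the first: translating the phrase ``representation contained in $\Su(N)$'' into the Hermiticity condition that Proposition~\ref{prop:unitary-equiv} needs. Once that identification is made the argument is purely mechanical, which is consistent with the proposition being flagged as an immediate consequence. There is no substantive obstacle beyond correctly matching the hypotheses, and no new calculation is required.
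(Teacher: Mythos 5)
Your proposal is correct and matches the paper exactly: the paper states Proposition~\ref{prop:unitary-equivRedIrred} as an immediate consequence of Proposition~\ref{prop:unitary-equiv}, with no further argument given. Your hypothesis-checking — in particular noting that, in the paper's physicist convention, ``contained in $\mathfrak{su}(N)$'' means the representation matrices are Hermitian — is precisely the (unwritten) verification the paper's corollary relies on.
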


\section{Parameter values for the  $N=7$ numerical example}
\label{app:N7-values}
Below are the numerical values for the parameter point used in the example analyzed in section~\ref{sec:N7}. All the non-zero elements are listed, except those which can be obtained by symmetry of $\Lambda$.
{\allowdisplaybreaks\setlength{\lineskip}{3pt}
\setlength{\lineskiplimit}{4pt}\begin{alignat*}{8}
&\Lambda_{1,3} &&= -1, \quad &&\Lambda_{4,6} &&= \frac{1}{2}, \quad &&\Lambda_{7,19} &&\!= \frac{5}{2}, \quad &&\Lambda_{7,43}&&\!= \frac{1}{2\sqrt{2}}, \\
&\Lambda_{7,44} &&= -\frac{1}{2 \sqrt{6}}, \quad &&\Lambda_{7,45} &&= -\frac{1}{4 \sqrt{3}}, \quad &&\Lambda_{7,46} &&\!= -\frac{\sqrt{5}}{4}, \\
&\Lambda_{8,20} &&= 1, \quad &&\Lambda_{10,13} &&= \frac{3}{2}, \quad &&\Lambda_{11,15} &&= -\frac{1}{4}, \quad &&\Lambda_{11,16} &&= \frac{3}{4}, \\
&\Lambda_{11,17} &&= \frac{1}{4}, \quad &&\Lambda_{12,21} &&= 1, \quad &&\Lambda_{15,16} &&= \frac{1}{4}, \quad &&\Lambda_{15,17} &&= 1, \\
&\Lambda_{15,18} &&= \frac{1}{2 \sqrt{2}}, \quad &&\Lambda_{16,17} &&= -\frac{1}{4}, \quad &&\Lambda_{17,18} &&= \frac{1}{2 \sqrt{2}}, \quad &&\Lambda_{19,43} &&= \frac{1}{2 \sqrt{2}}, \\
&\Lambda_{19,44} &&= -\frac{1}{2 \sqrt{6}}, \quad &&\Lambda_{19,45} &&= -\frac{1}{4\sqrt{3}}, \quad &&\Lambda_{19,46} &&= -\frac{\sqrt{5}}{4}, \quad &&\Lambda_{22,24} &&= -\frac{3}{2}, \\
&\Lambda_{25,27} &&= -\frac{1}{2}, \quad &&\Lambda_{28,31} &&= \frac{1}{4}, \quad &&\Lambda_{28,34} &&= -\frac{1}{4}, \quad &&\Lambda_{28,40} &&= -\frac{3}{4}, \\
&\Lambda_{29,41} &&= -2, \quad &&\Lambda_{30,31} &&= -\frac{1}{2\sqrt{2}}, \quad &&\Lambda_{30,34} &&= -\frac{1}{2 \sqrt{2}}, \quad &&\Lambda_{31,34} &&= \frac{3}{2}, \\
&\Lambda_{31,40} &&= \frac{1}{4}, \quad &&\Lambda_{32,37} &&= \frac{3}{2}, \quad &&\Lambda_{33,42} &&= -3, \quad &&\Lambda_{34,40} &&= -\frac{1}{4}, \\
&\Lambda_{35,36} &&= -\frac{1}{2\sqrt{2}}, \quad &&\Lambda_{35,38} &&= -\frac{1}{2 \sqrt{2}}, \quad &&\Lambda_{36,38} &&= \frac{7}{4}, \quad &&\Lambda_{43,44} &&= -\frac{\sqrt{3}}{4}, \\
&\Lambda_{43,45} &&= -\frac{1}{4}\sqrt{\frac{3}{2}}, \quad &&\Lambda_{43,46} &&= -\frac{3}{4}\sqrt{\frac{5}{2}}, \quad &&\Lambda_{44,45} &&= -\frac{1}{4\sqrt{2}}, \quad &&\Lambda_{44,46} &&= \frac{3}{4}\sqrt{\frac{3}{10}}, \\
&\Lambda_{44,47} &&= -\frac{3}{\sqrt{5}}, \quad &&\Lambda_{45,46} &&= \frac{1}{4}\sqrt{\frac{3}{5}}, \quad &&\Lambda_{45,47} &&= \frac{1}{2\sqrt{10}}, \quad &&\Lambda_{45,48} &&= -\sqrt{\frac{7}{2}}, \\
&\Lambda_{46,47} &&= \frac{11}{10 \sqrt{6}}, \quad &&\Lambda_{46,48} &&= \sqrt{\frac{7}{30}}, \quad &&\Lambda_{47,48} &&= \frac{1}{3}\sqrt{\frac{7}{5}}
\end{alignat*}}\relax
\begin{equation*}
\begin{alignedat}{4}
&L_i &&= 1,\quad &&\mathrel{i}&&=1,2,3,4,5,6,22,23,24,25,26,27, \\
&L_i &&= \frac{1}{\sqrt{2}},\quad &&\mathrel{i}&&=7,8,10,11,12,15,28,29,31,32,33,36,40,41,42, \\
&L_i &&= -\frac{1}{\sqrt{2}},\quad &&\mathrel{i}&&=13,16,17,19,20,21,34,37,38,
\end{alignedat}
\end{equation*}

\begin{equation*}
\begin{aligned}
&L_{43} &&= \frac{\sqrt{3}}{2}, &&L_{44} &&= \frac{1}{6} \left(1+2\sqrt{5}\right), \\
&L_{45} &&= \frac{5+\sqrt{5}+3 \sqrt{105}}{30\sqrt{2}}, &&L_{46} &&= \frac{1}{60} \left(-7 \sqrt{6}-3 \sqrt{14}+5\sqrt{30}\right), \\
&L_{47} &&= \frac{1}{30} \left(18-\sqrt{21}\right), &&L_{48} &&= \frac{1}{2}\sqrt{\frac{5}{3}}, \\
&M &&= L.
\end{aligned}
\end{equation*}


\bibliographystyle{JHEP}

\bibliography{ref}

\end{document}